\setlist[enumerate]{itemsep=0mm}
\theoremstyle{plain}
\newtheorem{thm}{Theorem}[section]
\newtheorem{lem}[thm]{Lemma}
\newtheorem{algor}[thm]{Algorithm}
\theoremstyle{remark}
\titleformat{\section}{\large\bfseries}{\thesection}{1em}{}
\setlist[enumerate]{itemsep=0mm}
\date{}
\author{Dohan Kim\footnote{E-mail: dkim@airesearch.kr}\\\\
\footnotesize A.I. Research Co., 2537-1 Kyungwon Plaza 201, Sinheung-dong, Sujeong-gu, \\\footnotesize Seongnam-si, Kyunggi-do, 461-811, South Korea\\}
\title{Distributed agent-based automated theorem proving in order-sorted first-order logic}
\begin{document}

\maketitle

\begin{abstract}
This paper presents a distributed agent-based automated theorem proving framework based on order-sorted first-order logic. Each agent in our framework has its own knowledge base, communicating to its neighboring agent(s) using message-passing algorithms. The communication language between agents is restricted in such a manner that each agent can only communicate to its neighboring agent(s) by means of their common language. In this paper we provide a refutation-complete report procedure for automated theorem proving in order-sorted first-order logic in a subclass of distributed agent-based networks. Rather than studying and evaluating the performance improvement of the automated theorem proving in order-sorted first-order logic using parallel or distributed agents, this paper focuses on building proofs in order-sorted first-order logic in a distributed manner under the restriction that agents may report their knowledge or observations only with their predefined language.
\vskip .2in
\noindent {\bf Keywords: }Order-sorted first-order logic, Distributed automated theorem proving, Distributed order-sorted resolution, Distributed agents.
\end{abstract}

\section{Introduction}
\label{Introduction}
Automated theorem proving~\cite{fitting1996,loveland1978} is concerned with theorem proving using a computer program in an automatic manner and has been researched for several decades. Distributed (agent-based) automated theorem proving~\cite{conry1990,macintosh1991,fisher1997}, which is a subfield of automated theorem proving, allows each agent to start with a subset of the initial theory and to prove a target theorem in a coordinated and collaborative manner. It provides an agent with the capability that it may reason beyond its local knowledge~\cite{conry1990}. To prove a target theorem, each agent may perform its reasoning task automatically and concurrently, and reports knowledge to other agents if necessary~\cite{fisher1997}. One of the key issues in distributed automated theorem proving is the manner of communication between agents. Although each agent may or may not have a fixed language~\cite{keisler2012a,amir2005}, our approach is to restrict a pair of agents to communicate by means of their common language. This restricted communication can be employed for agents with report facilities, where each agent has limited privilege and is only allowed to report its knowledge or observations to its neighboring agent(s) with its predefined language.\\
\indent First-Order Logic (FOL) plays a key role in the knowledge representation for distributed automated theorem proving in that it often has a necessary expressive power for knowledge bases~\cite{amir2005,bourgne2011}. Furthermore, there exists a sound and refutation-complete resolution procedure for FOL~\cite{robinson1965}. (A resolution procedure is called \emph{refutation-complete} if it can derive a contradiction from every unsatisfiable set of formulae~\cite{inoue1991}.) However, in ordinary (unsorted) FOL, the universe of discourse is a single (unstructured) homogeneous set, showing a limited capability in terms of expressing sorted or structured information in a natural manner~\cite{cohn1989,walther1990}.\\
\indent Many-sorted FOL~\cite{abadi2010,cohn1989,Palacz2016} augments FOL by adding a set of sorts in its language, dividing the universe of discourse into (possibly overlapped) subsets called sorts. Ordinary (unsorted) FOL can therefore be viewed as one-sorted logic. The salient feature of many-sorted FOL is that it often increases deductive efficiency by means of the possibly smaller and divided search space along with the shorter deduction, avoiding some pointless branches of the search space~\cite{cohn1987,chien1998}. Knowledge representation and reasoning for many-sorted FOL have already been discussed in~\cite{walther1985,walther1990,cohn1989}. Note that there are many kinds of many-sorted FOLs. For instance, some many-sorted FOLs restrict each sort to be pairwise disjoint~\cite{abadi2010,enderton2001}, while others allow sorts to be overlapping~\cite{cohn1989,walther1990}. There are also approaches to using a partially-ordered (or preordered) set of sorts in the language of many-sorted FOL. A many-sorted FOL that has a partially-ordered (or preordered) set of sorts is specifically called an Order-Sorted FOL (OSFOL)\footnote{In this paper we use ``order-sorted first-order logic" and ``order-sorted logic'' interchangeably.}~\cite{beierle1992,frisch1991,kaneiwa2004,nelson2010,oberschelp1990,weibel1997}.\\
\indent In this paper we present a distributed automated theorem proving framework, in which agents and their communications are represented by a distributed agent graph in a distributed environment. Each agent has its own knowledge base containing OSFOL clauses, communicating to its neighboring agent(s) by means of their common language. \\
\indent The remainder of this paper is organized as follows. Section~\ref{section:preliminaries} gives a brief overview of OSFOL and its resolution procedure used in this paper. We also describe a signature tree in this section. Section~\ref{section:agentTP} presents our distributed automated theorem proving framework with OSFOL. In particular, we provide a refutation-complete OSFOL report procedure for automated theorem proving in a signature tree. Finally, we conclude in Section~\ref{section:conclusions}.

\section{Preliminaries}
\label{section:preliminaries}
In this section we summarize the necessary syntax and semantics of Order-Sorted First-Order Logic (OSFOL) along with its resolution procedure used in this paper. We also describe a signature network and knowledge base on a distributed agent graph in this section. The definitions and results in Sections~\ref{sec:syntax},~\ref{sec:semantics}, and~\ref{sec:resolution} are found in~\cite{beierle1992,cohn1989,frisch1991,ganesh2004,enderton2001,goguen1992,hedtstuck1990,kaneiwa2004,nelson2010,walther1988,weibel1997}. We assume the reader has some familiarity with FOL and its resolution procedure.

\subsection{Syntax}
\label{sec:syntax}
In contrast to the standard FOL, the language of OSFOL has \emph{restricted variables} to denote that each variable is restricted to range over a specific sort (i.e. a subset of the domain). A restricted variable is denoted in the form of $x\mbox{:}s$, where $x$ is a variable name and $s$ is a sort, respectively. We first describe a signature of OSFOL used in this paper.

An order-sorted signature $\Sigma=(S, P, F)$ consists of:
\begin{enumerate}[(1)]
\item A finite partially-ordered set of sorts $(S, \preceq)$, called the \emph{sort hierarchy}, with a greatest element $\top$ and a least element $\bot$.
\item An $S^*$-indexed family  $(P_w)_{w \in S^*}$ of sets of predicate symbols.
\item An $(S^* \times S)$-indexed family $(F_{w,s})_{w \in S^*,\,s \in S}$ of sets of function symbols. 
\end{enumerate}

For each sort $s \in S-\{\top, \bot\}$, there is a unary predicate $s(\cdots)$, called a \emph{sort predicate}. (For simplicity, a sort and its sort predicate are denoted by the same symbol in this paper. The distinction is clear from context.) The argument sort of a sort predicate $s(\cdots)$ is $\top$, i.e. $s \in P_{\top}$. We assume that there is at least one constant or ground term for every sort except $\bot$ to avoid problems involving \emph{empty sorts} discussed in~\cite{goguen1987,weidenbach1991}.

Given an order-sorted signature $\Sigma=(S, P, F)$, the set of variables of all sorts over $\Sigma$ is denoted by $V=\bigcup_{s \in S}V_s$, where $V_s$ denotes the set of variables for some $s \in S$. The sort of term $t$, denoted by $[t]$, is $s \in S$ if and only if $t \in V_s$ or $t=f(\cdots)$ and $f \in F_{w, s}$ for some $w \in S^*$. 

The set $T(\Sigma, V) = \bigcup_{s \in S}T_s(\Sigma, V)$ over $\Sigma$ and $V$, called the \emph{set of well-sorted terms} (or \emph{set of} $\Sigma$-\emph{terms} for short), is the smallest set of terms of all sorts satisfying:
\begin{enumerate}[(1)]
\item If $x\mbox{:}s \in V_s$, then $x\mbox{:}s \in T_s(\Sigma, V)$.
\item If $c \in F_{\varepsilon, s}$, then $c \in T_s(\Sigma, V)$. ($\varepsilon$ is the empty string of sorts.)
\item If $t_1, \ldots, t_j \in T(\Sigma, V), f \in F_{s_1\cdots s_j, s}$, and $[t_i] \preceq s_i$ for each $i$ with $1 \leq i \leq j$, then $f(t_1,\ldots, t_j) \in T_s(\Sigma, V)$.
\end{enumerate}

The \emph{set of well-sorted formulae} $\Phi (\Sigma, V)$ (or \emph{set of } $\Sigma$-\emph{formulae} for short) is defined inductively by:
\begin{enumerate}[(1)]
\item An atomic formula $p(t_1,\ldots, t_j)$ is in $\Phi (\Sigma, V)$ if $p \in P_{s_1\cdots s_j}$ with  $[t_i] \preceq s_i$ for each $i$ with $1\leq i\leq j$.
\item If $\phi, \psi \in \Phi(\Sigma, V)$, then so are $\neg \phi$, $\phi \vee \psi$, $\phi \wedge \psi$, and $\phi \Rightarrow \psi$.
\item If $x\mbox{:}s \in V_s$ and $\phi \in \Phi (\Sigma, V)$, then so are $\forall x\mbox{:}s\,.\,\phi$ and  $\exists x\mbox{:}s\,.\,\phi$
\end{enumerate}

Since the argument sort of each sort predicate is $\top$, atomic formulae involving sort predicates are always well-sorted. A $\Sigma$-formula is called a $\Sigma$-\emph{sentence} if the $\Sigma$-formula is closed, namely it does not contain any free variable. A \emph{well-sorted literal} (or $\Sigma$-\emph{literal} for short) is either a well-sorted atomic formula or its negation. A \emph{well-sorted clause} (or $\Sigma$-\emph{clause} for short) is defined as a finite disjunction of $\Sigma$-literals. It is also defined as a finite set of $\Sigma$-literals, which is synonymous with the universal closure of the disjunction of those $\Sigma$-literals. The empty clause is also a $\Sigma$-clause and is written as {\scriptsize$\square$}. A $\Sigma$-formula is said to be in Conjunctive Normal Form (CNF) if it is a conjunction of $\Sigma$-clauses, in which a conjunction of $\Sigma$-clauses can simply be denoted by the set of those $\Sigma$-clauses.

\subsection{Semantics}
\label{sec:semantics}
As shown in the previous section, an order-sorted signature contains a partially-ordered set of sorts. By applying a set-theoretic semantics, a sort is naturally interpreted as a subset of a given universe. The top sort $\top$ and the bottom sort $\bot$ are interpreted as the non-empty universe and the empty set, respectively.

Given an order-sorted signature $\Sigma=(S, P, F)$, a $\Sigma$-\emph{structure} $M$ is a pair $(A, I)$ such that: 
\begin{itemize}
\item $A=\{A_s\,|\, s \in S\}$ is an $S$-indexed family of sets.
\item $I$ is a function, called an \emph{interpretation}, where
\begin{enumerate}[(1)]
\item if $s \in S$, then $I(s)=A_s$. Furthermore, if $s_1 \preceq s_2$ for $s_1, s_2 \in S$, then $I(s_1) \subseteq I(s_2)$.
\item if $p \in P_w$ for $w=s_1\cdots s_n$, then $I(p) \subseteq I(s_1)\times\cdots\times I(s_n)$.
\item if $f \in F_{w, s}$ for $w=s_1\cdots s_n$, then $I(f):I(s_1)\times\cdots\times I(s_n) \rightarrow I(s)$.
\item if $s$ is a sort predicate symbol\footnote{A structure $M$ may interpret a sort predicate symbol in other ways~\cite{beierle1992,kaneiwa2004}, which is beyond the scope of this paper.} for some $s \in S$, then $I(s) = A_s$.
\end{enumerate}
\end{itemize}

A variable assignment $\alpha$ on a $\Sigma$-structure $M = (A, I)$ is an $S$-indexed family of functions $\alpha = \{\alpha_s:V_s \rightarrow A_s\,|\,s\in S\}$. For $x\mbox{:}s\in V_s$,  $\alpha_s(x\mbox{:}s)$ is simply denoted by $\alpha(x\mbox{:}s)$. The denotation $\llbracket t \rrbracket _\alpha$ of term $t$ under a variable assignment $\alpha$ is defined inductively by $\llbracket x\mbox{:}s \rrbracket _\alpha = \alpha(x\mbox{:}s)$ and $\llbracket f(t_1,\ldots, t_n)\rrbracket _\alpha = I(f)(\llbracket t_1 \rrbracket_\alpha,\ldots,\llbracket t_n \rrbracket_\alpha)$. 

We say that a $\Sigma$-structure $M$ \emph{satisfies} a $\Sigma$-formula $\phi$ with $\alpha$, denoted by $M \models_\Sigma \phi[\alpha]$, if the following holds:
\begin{enumerate}[(1)]
\item $M \models_\Sigma p(t_1,\ldots,t_n)[\alpha]$ iff $(\llbracket t_1 \rrbracket _\alpha,\ldots, \llbracket t_n \rrbracket _\alpha) \in I(p)$.
\item $M \models_\Sigma \neg\psi[\alpha]$ iff $M \not\models_\Sigma \psi[\alpha]$.
\item $M \models_\Sigma (\phi_1 \Rightarrow \phi_2)[\alpha]$ iff $M \models_\Sigma \neg\phi_1[\alpha]$ or $M \models_\Sigma\phi_2[\alpha]$.
\item $M \models_\Sigma \forall x\mbox{:}s\,.\,\psi[\alpha]$ iff for every $d\in A_s$, $M \models_\Sigma \psi[\alpha(x\mbox{:}s\,|\,d)]$, where $\alpha(x\mbox{:}s\,|\,d)$ maps the variable $x\mbox{:}s$ to $d$ and every other variable $y \in V$ to $\alpha(y)$.
\item $M \models_\Sigma \exists x\mbox{:}s\,.\,\psi[\alpha]$ iff there is some $d\in A_s$, $M \models_\Sigma \psi[\alpha(x\mbox{:}s\,|\,d)]$, where $\alpha(x\mbox{:}s\,|\,d)$ is as above.
\end{enumerate}

We say that a $\Sigma$-structure $M$ satisfies a $\Sigma$-formula $\phi$, denoted by $M \models_\Sigma \phi$, if $M$ satisfies $\phi$ with every variable assignment $\alpha$. Given a set $\Gamma$ of $\Sigma$-formulae, a $\Sigma$-structure $M$ is called a \emph{model} of $\Gamma$ if  for every $\Sigma$-formula $\psi \in \Gamma$, $M \models_\Sigma \psi$. Meanwhile, given a set $\Lambda$ of $\Sigma$-formulae, we say that $\Lambda$ is $\Sigma$-\emph{unsatisfiable} if there is no model of $\Lambda$ with respect to $\Sigma$.

Given a $\Sigma$-formula $\phi$, the \emph{relativization} $\widehat{\phi}$ of $\phi$ is the unsorted counterpart of $\phi$, which provides a means to an alternative semantics for OSFOL. For instance, $\phi_1\equiv \forall x\mbox{:}s\,.\,\psi$ and $\phi_2\equiv\exists x\mbox{:}s\,.\,\psi$ are relativized to $\widehat{\phi_1}\equiv\forall x\,.\,s(x) \Rightarrow \psi^\prime$ and $\widehat{\phi_2}\equiv\exists x\,.\,s(x) \wedge \psi^\prime$, respectively, where $\psi^\prime$ is the formula obtained from $\psi$ by substituting $x$ for all free occurrences of $x\mbox{:}s$ in $\psi$. Meanwhile, $c\in F_{\varepsilon,s}$ and $f \in F_{w, s}$ for $w=s_1\cdots s_n$ in an order-sorted signature $\Sigma$ are relativized to $s(c)$ and $\forall x_1,\ldots,x_n\,.\,(s_1(x_1) \wedge \cdots \wedge s_n(x_n))\Rightarrow s(f(x_1,\ldots,x_n))$, respectively. Let $\widehat{\Sigma}$ be the relativized version of $\Sigma$. Now, the connection between OSFOL and (unsorted) FOL in terms of semantics is described as the following theorem.

\begin{thm}\cite{beierle1992,weibel1997}
\label{thm:refcomp}
Let $\Sigma$ be an order-sorted signature and $\phi$ be a $\Sigma$-sentence. Then, $\phi$ is $\Sigma$-unsatisfiable iff $\widehat{\phi} \cup \widehat{\Sigma}$ is unsatisfiable.
\end{thm}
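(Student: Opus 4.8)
The plan is to prove both directions by relating $\Sigma$-structures to (unsorted) models of $\widehat{\Sigma}$ via a back-and-forth translation, exactly as one does for the classical relativization theorem, but keeping track of the sort-hierarchy conditions. The key observation is that $\widehat{\Sigma}$ is precisely the set of unsorted axioms that force an ordinary structure to ``look like'' an order-sorted one: the sentences $\forall x.(s_1(x)\Rightarrow s_2(x))$ for $s_1\preceq s_2$ encode the inclusions $I(s_1)\subseteq I(s_2)$, the sentences $s(c)$ for $c\in F_{\varepsilon,s}$ and $\forall \bar x.((s_1(x_1)\wedge\cdots\wedge s_n(x_n))\Rightarrow s(f(\bar x)))$ encode that function symbols respect their arities, and the standing assumption that every sort (except $\bot$) is non-empty guarantees that the relativized quantifiers never range vacuously.

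First I would prove the contrapositive of the ``only if'' direction: assume $\widehat\phi\cup\widehat\Sigma$ is satisfiable, say by an unsorted structure $N$ with universe $U$. Define a $\Sigma$-structure $M=(A,I)$ by setting $A_\top=U$, $A_\bot=\varnothing$, and $A_s=\{a\in U : N\models s(a)\}$ for each other sort $s$; interpret each predicate and function symbol of $\Sigma$ by its $N$-interpretation, restricted/corestricted appropriately. The axioms in $\widehat\Sigma$ are exactly what is needed to check that this is a well-defined $\Sigma$-structure: the inclusion axioms give $I(s_1)\subseteq I(s_2)$ when $s_1\preceq s_2$, and the function-axioms guarantee that $I(f)$ maps $I(s_1)\times\cdots\times I(s_n)$ into $I(s)$, so corestriction makes sense. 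Then a routine induction on the structure of $\Sigma$-formulae shows that for every $\Sigma$-formula $\psi$ and every variable assignment $\alpha$ on $M$, $M\models_\Sigma\psi[\alpha]$ iff $N\models\widehat\psi[\alpha]$ (reading $\alpha$ as an unsorted assignment), the quantifier cases being handled by the $s(x)\Rightarrow\cdots$ / $s(x)\wedge\cdots$ pattern of relativization. Applying this to $\phi$ gives $M\models_\Sigma\phi$, so $\phi$ is $\Sigma$-satisfiable.

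For the converse, suppose $\phi$ is $\Sigma$-satisfiable, say $M=(A,I)\models_\Sigma\phi$. Build an unsorted structure $N$ with universe $A_\top$, interpreting each sort predicate $s$ by $A_s\subseteq A_\top$ and each other symbol of $\Sigma$ by its $M$-interpretation (function symbols of $\Sigma$ are total on the relevant products by clause (3) of the definition of a $\Sigma$-structure, and we extend them to all of $A_\top^n$ arbitrarily, which is harmless because $\widehat\Sigma$ only constrains them on the sorted tuples). The defining clauses of $I$ immediately yield $N\models\widehat\Sigma$: the inclusion clause $I(s_1)\subseteq I(s_2)$ gives the subsort axioms, and the typing conditions on $I(c)$ and $I(f)$ give the constant- and function-axioms. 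The same satisfaction-preservation induction as above, run in the other direction, shows $N\models\widehat\phi$. Hence $\widehat\phi\cup\widehat\Sigma$ is satisfiable. Taking contrapositives on both implications gives the stated ``$\Sigma$-unsatisfiable iff unsatisfiable'' equivalence.

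The main obstacle is the satisfaction-preservation lemma, and within it the base case for atomic formulae: one must check that the well-sortedness condition $[t_i]\preceq s_i$ in the definition of $\Sigma$-formulae makes the denotation $\llbracket t_i\rrbracket_\alpha$ land in $I(s_i)$, so that the unsorted atom $\widehat p(\cdots)$ is evaluated consistently — this in turn requires a preliminary sub-induction showing $\llbracket t\rrbracket_\alpha\in A_{[t]}$ for every $\Sigma$-term $t$, which is where clause (3) of the term-formation rules and the inclusion axioms of $\widehat\Sigma$ are used together. The only other subtlety is the non-emptiness of sorts: without it the direction building $N$ from $M$ is fine, but the direction building $M$ from $N$ could in principle produce an empty $A_s$ for some $s\neq\bot$, contradicting the standing assumption; the axiom $s(c)\in\widehat\Sigma$ for a witnessing constant $c$ rules this out, which is exactly why that assumption was made in Section~\ref{sec:syntax}.
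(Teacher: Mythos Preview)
The paper does not give its own proof of this theorem: it is stated as a cited result from \cite{beierle1992,weibel1997} and serves only as background for the later material. Your proposal is the standard relativization argument---build a $\Sigma$-structure from an unsorted model of $\widehat{\Sigma}$ by carving out $A_s$ as the extension of the sort predicate $s$, and conversely flatten a $\Sigma$-structure to an unsorted one on $A_\top$ with the $A_s$ as interpretations of the sort predicates, then prove a satisfaction-preservation lemma by induction on formulae---and it is correct; this is essentially the proof one finds in the cited sources. Your remarks about the auxiliary term-level induction ($\llbracket t\rrbracket_\alpha\in A_{[t]}$) and about non-emptiness of sorts identify exactly the two places where care is needed.
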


\subsection{Resolution procedure}
\label{sec:resolution}
A resolution procedure for FOL was first introduced by Robinson~\cite{robinson1965,wos1973}, which generally requires the transformation of FOL sentences into Conjunctive Normal Form (CNF)~\cite{socher1991}. The CNF transformation for an FOL sentence includes a  \emph{Skolemization}~\cite{stickel1988,mccune1988,frisch1991} procedure, where an FOL sentence $\phi$ is unsatisfiable if and only if its Skolemized sentence $\phi^\prime$ is unsatisfiable. We now briefly discuss the Skolemization procedure for a $\Sigma$-sentence.

The Skolemization procedure~\cite{mccune1988, frisch1991} for a $\Sigma$-sentence in prenex form is similar to that of FOL, where a $\Sigma$-sentence can be transformed into its equivalent prenex form. (The transformation process of a $\Sigma$-sentence into its prenex form and into CNF using the Skolemization procedure are described in~\cite{frisch1986,frisch1991}.) Let $y\mbox{:}s$ be an existentially quantified variable and let $x_1\mbox{:}s_1,\ldots,x_n\mbox{:}s_n$ be universally quantified variables such that $\exists y\mbox{:}s$ occurs in the scope of $x_1\mbox{:}s_1,\ldots,x_n\mbox{:}s_n$. Then, the existential quantifier $\exists y\mbox{:}s$ is removed, and all occurrences of the existentially quantified variable $y\mbox{:}s$ are replaced by $f(x_1\mbox{:}s_1,\ldots,x_n\mbox{:}s_n) \wedge s(f(x_1\mbox{:}s_1,\ldots,x_n\mbox{:}s_n))$, where $f$ is a new $n$-ary function symbol of sort $s$. If no universal quantifier is in the scope of $\exists y\mbox{:}s$, then $f(x_1\mbox{:}s_1,\ldots,x_n\mbox{:}s_n)$ is simply a Skolem constant $f$. For instance, $\forall x_1\mbox{:}s_1\,\exists y_1\mbox{:}s_2\,.\,\\ E(x_1\mbox{:}s_1\;y_1\mbox{:}s_2)$ is Skolemized to $\forall x_1\mbox{:}s_1\,.\,(E(x_1\mbox{:}s_1\;f(x_1\mbox{:}s_1)) \wedge s_2(f(x_1\mbox{:}s_1)))$. The following lemma says that the $\Sigma$-unsatisfiability is preserved in the Skolemization procedure for a $\Sigma$-sentence.

\begin{lem}\cite{frisch1991}
\label{lem:skolem}
A $\Sigma$-sentence $\phi$ is $\Sigma$-unsatisfiable iff its Skolemized $\Sigma$-sentence $\phi^\prime$ is $\Sigma$-unsatisfiable.
\end{lem}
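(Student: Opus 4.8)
The plan is to reduce the claim to the corresponding fact for unsorted FOL by passing through the relativization machinery and Theorem~\ref{thm:refcomp}. First I would observe that it suffices to prove the single-step statement: if $\phi$ arises from $\phi^{\star}$ by removing one innermost existential quantifier $\exists y\mbox{:}s$ (lying in the scope of the universal quantifiers $x_1\mbox{:}s_1,\ldots,x_n\mbox{:}s_n$) and replacing every occurrence of $y\mbox{:}s$ by the new term $f(x_1\mbox{:}s_1,\ldots,x_n\mbox{:}s_n)$ of sort $s$, then $\phi^{\star}$ is $\Sigma$-unsatisfiable iff $\phi$ is $\Sigma'$-unsatisfiable, where $\Sigma'$ extends $\Sigma$ by the fresh symbol $f\in F_{s_1\cdots s_n,\,s}$. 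The full lemma then follows by a straightforward induction on the number of existential quantifiers, peeling them off from the inside out after moving $\phi$ to prenex form (which, as noted in the excerpt, preserves $\Sigma$-equivalence). Note that the conjunct $s(f(\ldots))$ mentioned in the Skolemization description is, in $\Sigma'$, automatically part of the well-sortedness of the new term, so it need not be tracked as a separate clause at the sorted level.

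For the single-step claim, the easy direction is the ``only if'' (soundness of de-Skolemization): given any $\Sigma'$-model $M=(A,I)$ of $\phi$, forget the interpretation of $f$ to get a $\Sigma$-structure; because $M\models_\Sigma \forall x_1\mbox{:}s_1\cdots\forall x_n\mbox{:}s_n\,.\,\psi[f(\vec{x})/y]$ and $I(f)(\vec d)\in A_s$ for all $\vec d$ in the appropriate sorts, we may instantiate the new existential witness at $d:=I(f)(\vec d)$, so $M\models_\Sigma\phi^{\star}$. The ``if'' direction is the standard witnessing argument: given a $\Sigma$-model $M$ of $\phi^{\star}$, for each tuple $\vec d\in A_{s_1}\times\cdots\times A_{s_n}$ the truth of $\forall\vec{x}\,.\,\exists y\mbox{:}s\,.\,\psi$ supplies some witness in $A_s$; picking one for every tuple (using choice) defines a function $A_{s_1}\times\cdots\times A_{s_n}\to A_s$, which we take as $I(f)$, and this extends $M$ to a $\Sigma'$-model of $\phi$.

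The cleanest route, and the one I would actually write, avoids reproving these witnessing arguments by hand: apply Theorem~\ref{thm:refcomp} to both sides. Thus $\phi^{\star}$ is $\Sigma$-unsatisfiable iff $\widehat{\phi^{\star}}\cup\widehat{\Sigma}$ is unsatisfiable, and $\phi$ is $\Sigma'$-unsatisfiable iff $\widehat{\phi}\cup\widehat{\Sigma'}$ is unsatisfiable. Now $\widehat{\Sigma'}=\widehat{\Sigma}\cup\{\forall x_1\cdots\forall x_n\,.\,(s_1(x_1)\wedge\cdots\wedge s_n(x_n))\Rightarrow s(f(x_1,\ldots,x_n))\}$, and one checks that $\widehat{\phi}$ together with this extra relativization axiom is exactly (up to logical equivalence) the ordinary unsorted Skolem form of $\widehat{\phi^{\star}}$: the relativized sentence $\widehat{\phi^{\star}}$ has the shape $\forall\vec{x}\,.\,(\bigwedge_i s_i(x_i))\Rightarrow \exists y\,.\,(s(y)\wedge\psi')$, whose classical Skolemization introduces $f(\vec x)$ and the axiom $\forall\vec x\,.\,(\bigwedge_i s_i(x_i))\Rightarrow (s(f(\vec x))\wedge \psi'[f(\vec x)/y])$. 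So the equivalence reduces to the classical fact that unsorted Skolemization preserves (un)satisfiability, which we may invoke. The main obstacle is bookkeeping rather than conceptual: one must verify carefully that relativization commutes with this Skolemization step — in particular that the sort guards $s_i(x_i)$ attached to the universal prefix in $\widehat{\phi^{\star}}$ end up in precisely the right place in the Skolem axiom, and that the fresh function symbol $f$ does not already occur in $\widehat{\Sigma}$ — so that the two unsorted theories being compared genuinely differ by one classical Skolemization. Once that syntactic identification is pinned down, the lemma is immediate.
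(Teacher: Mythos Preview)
The paper does not supply its own proof of this lemma: it is stated with a citation to \cite{frisch1991} and used as a black box, so there is nothing to compare against at the level of argument. Your proposal is a correct and standard route---both the direct semantic witnessing argument and the reduction via relativization and Theorem~\ref{thm:refcomp} are valid---and either would serve as an acceptable proof; the only caveat is that the paper's Skolemization explicitly retains the sort-predicate conjunct $s(f(\ldots))$ in $\phi'$, so you should note (as you essentially do) that this conjunct is automatically satisfied in any $\Sigma'$-structure because $I(f)$ lands in $A_s=I(s)$, making its presence or absence immaterial for $\Sigma'$-satisfiability.
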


The following definitions and the description of the $\Sigma$-resolution procedure are found in~\cite{frisch1991,weibel1997, beierle1992,Hustadt1999,kaneiwa2004,martelli1982,slagle1970}.

Let \emph{substitution} $\sigma$ be denoted by $\{x_1/t_1,\ldots,x_j/t_j\}$, where the terms $t_i$ are substituted for the variables $x_i$ for $1 \leq i \leq j$ and no $x_i$ occurs in any of $t_k$ for $1 \leq k \leq j$. For every $x \in V$, a substitution $\sigma$ is called \emph{well-sorted substitution} (or $\Sigma$-\emph{substitution} for short) if $\sigma x \in T(\Sigma, V)$ and $[\sigma x] \preceq [x]$. A $\Sigma$-substitution is extended to a mapping from $T(\Sigma, V)$ to $T(\Sigma, V)$. With some abuse of notation a $\Sigma$-substitution $\sigma$ is applicable to a $\Sigma$-formula (respectively, a set of $\Sigma$-formulae) by applying $\sigma$ to every $\Sigma$-term occurring in the $\Sigma$-formula (respectively, the set of $\Sigma$-formulae). Let $F=\{\phi_1,\ldots, \phi_n\}$ be a set of $\Sigma$-formulae and $\theta$ be a substitution. Then, $\theta$ is called a \emph{well-sorted unifier} (or $\Sigma$-\emph{unifier} for short) of $F$ with respect to $\Sigma$ if it is a $\Sigma$-substitution and is a unifier of $F$ (i.e. $\theta(\phi_1)=\cdots = \theta(\phi_n)$). If there is a $\Sigma$-unifier of $F$, then $F$ is said to be \emph{well-sorted unifiable} (or $\Sigma$-\emph{unifiable} for short).

Similarly to the resolution procedure for FOL, the resolution procedure for OSFOL, called the \emph{order-sorted resolution procedure} (or $\Sigma$-\emph{resolution procedure} for short) operates on $\Sigma$-clauses and is based on unification. We assume that for any set of sorts $G\subset S$, the greatest lower bound of $G$ exists in a sort hierarchy $(S, \preceq)$, which assures the existence of a unique $\Sigma$-\emph{most general unifier} (or $\Sigma$-\emph{mgu} for short) for any $\Sigma$-unifiable set of $\Sigma$-formulae~\cite{beierle1992,walther1988}. If it is not the case, synthetic greatest lower bounds on the sorts can be added to a sort hierarchy in order to ensure the existence of a unitary $\Sigma$-unifier for a given $\Sigma$-unifiable set of $\Sigma$-formulae. The interested reader may refer to~\cite{weibel1997} for further details. 

Let $p_1$ and $p_2$ be predicate symbols in an order-sorted signature. If two atomic formulae $p_1(r_1,\ldots,r_j)$ and $p_2(s_1,\ldots,s_j)$ are identical, then there is nothing to unify. Meanwhile, if predicate symbols $p_1$ and $p_2$ are different, they cannot be unified at all. Otherwise, if $p_1 = p_2$ and two atomic formulae $p_1(r_1,\ldots,r_j)$ and $p_2(s_1,\ldots,s_j)$ are not identical, then consider a set $X= \{r_i \stackrel{\text{\tiny ?}}{=} s_i\,|\, i=1,\ldots,j\}$ of temporary equations on $\Sigma$-terms, which is extracted from $p_1(r_1,\ldots,r_j)$ and $p_2(s_1,\ldots,s_j)$. We use a binary predicate symbol $\stackrel{\text{\tiny ?}}{=}$ in $X$ to denote temporary equations on $T(\Sigma, V)$. If $\Sigma$-substitution $\tau$ makes all pairs of $\Sigma$-terms $r_i$ and $s_i$ syntactically equivalent, namely $\tau r_i = \tau s_i$ for $i=1,\ldots,j$, then $\tau$ unifies $p_1(r_1,\ldots,r_j)$ and $p_2(s_1,\ldots,s_j)$. 

The order-sorted unification algorithm with $\Sigma$-mgu performs any of the following routine until only (permanent) equations remain. If the algorithm terminates without failure and the remaining equations are $x_1\mbox{:}s_1=t_1,\ldots, x_n\mbox{:}s_n=t_n$, then it yields a substitution $\sigma$ (i.e. $\Sigma$-mgu) with $\{x_1\mbox{:}s_1/t_1,\ldots,x_n\mbox{:}s_n/t_n\}$.
\begin{enumerate}[(1)]
\item Select any temporary equation in $X$ having the form $x\mbox{:}s \stackrel{\text{\tiny ?}}{=} x\mbox{:}s$, where $x\mbox{:}s$ is a variable. Then, remove it from $X$.
\item Select any temporary equation in $X$ having the form $t \stackrel{\text{\tiny ?}}{=} x\mbox{:}s$, where $x\mbox{:}s$ is a variable and $t$ is a non-variable $\Sigma$-term. Then, replace it with $x\mbox{:}s \stackrel{\text{\tiny ?}}{=} t$ in $X$.
\item  Select any temporary equation in $X$ having the form $f_1(t^\prime_1,\ldots,t^\prime_i) \stackrel{\text{\tiny ?}}{=} f_2(t^{\prime\prime}_1,\ldots,t^{\prime\prime}_j)$, where $f_1$ and $f_2$ are function symbols. (A constant symbol is considered as a 0-ary function symbol here.) If $f_1\neq f_2$ or $i \neq j$, then return failure. Otherwise, replace it with the temporary equations $t^\prime_1\stackrel{\text{\tiny ?}}{=} t^{\prime\prime}_1,\ldots,t^\prime_i \stackrel{\text{\tiny ?}}{=} t^{\prime\prime}_i$ in $X$.
\item  Select any temporary equation in $X$ having the form $y\mbox{:}s^\prime\stackrel{\text{\tiny ?}}{=}t$, where $y\mbox{:}s^\prime$ is a variable of sort $s^\prime$ and $t$ is a non-variable $\Sigma$-term of sort $s^{\prime\prime}$. If $y\mbox{:}s^\prime$ occurs in $t$, or if $s^{\prime\prime} \not\preceq s^\prime$, then return failure. Otherwise, apply the substitution $\{y\mbox{:}s^\prime / t\}$ to all other temporary equations and replace $y\mbox{:}s^\prime\stackrel{\text{\tiny ?}}{=}t$ with the equation $y\mbox{:}s^\prime = t$ in $X$.
\item Select any unmarked temporary equation in $X$ having the form $y\mbox{:}s^\prime\stackrel{\text{\tiny ?}}{=}z\mbox{:}s^{\prime\prime}$, where $y\mbox{:}s^\prime$ and $z\mbox{:}s^{\prime\prime}$ are distinct variables. If $s^{\prime\prime} \preceq s^\prime$, then apply the substitution $\{y\mbox{:}s^\prime / z\mbox{:}s^{\prime\prime}\}$ to all other temporary equations and replace $y\mbox{:}s^\prime\stackrel{\text{\tiny ?}}{=}z\mbox{:}s^{\prime\prime}$ with the equation $y\mbox{:}s^\prime = z\mbox{:}s^{\prime\prime}$ in $X$. Otherwise, if $s^\prime \prec s^{\prime\prime}$, then apply the substitution $\{z\mbox{:}s^{\prime\prime} / y\mbox{:}s^\prime\}$ to all other temporary equations and replace $y\mbox{:}s^\prime\stackrel{\text{\tiny ?}}{=}z\mbox{:}s^{\prime\prime}$ with the equation $z\mbox{:}s^{\prime\prime} = y\mbox{:}s^\prime$ in $X$. Now, consider the case where $s^\prime \not\preceq s^{\prime\prime}$ and $s^{\prime\prime} \not\prec s^\prime$. If the greatest common subsort of $s^\prime$ and $s^{\prime\prime}$ is $\bot$, then return failure. Otherwise, let $x\mbox{:}s$ be a new variable of sort $s$, where $s$ is the greatest common subsort of $s^\prime$ and $s^{\prime\prime}$. Then, apply the substitution $\{y\mbox{:}s^\prime / x\mbox{:}s, \;z\mbox{:}s^{\prime\prime} / x\mbox{:}s\}$ to all other temporary equations and replace $y\mbox{:}s^\prime\stackrel{\text{\tiny ?}}{=}z\mbox{:}s^{\prime\prime}$ with the equations $y\mbox{:}s^\prime = x\mbox{:}s, \;z\mbox{:}s^{\prime\prime} = x\mbox{:}s$ in $X$.
\end{enumerate}

Let $C_1$ and $C_2$ be $\Sigma$-clauses, in which variables in $C_1$ and variables in $C_2$ are standardized apart. For some $m \subseteq C_1$ and $n \subseteq C_2$, if $m \cup \neg n$ is $\Sigma$-unifiable by $\Sigma$-mgu $\sigma$, then $(C_1-m)\sigma \cup (C_2-n)\sigma$ is a $\Sigma$-\emph{resolvent} of $C_1$ and $C_2$. Let $C$ be a $\Sigma$-clause and $L_1, L_2$ be $\Sigma$-literals such that $L_1$ and $L_2$ are $\Sigma$-unifiable by $\Sigma$-mgu $\sigma$. If $C=\{L_1, L_2\}\cup D$, then ($\{L_1\}\cup D)\sigma$ is called a \emph{factor} of $C$. The \emph{factoring} is the associated operation that derives ($\{L_1\}\cup D)\sigma$ from $C$. For instance, $\{P(x\mbox{:}s)\}$ is a factor of $\{P(x\mbox{:}s), P(y\mbox{:}s)\}$. 

Let $\Gamma$ be a set of $\Sigma$-clauses. A sequence $C_1,\ldots, C_k$ of one or more $\Sigma$-clauses is called a \emph{derivation} of $C_k$ from $\Gamma$ by $\Sigma$-\emph{resolution}, denoted by $\Gamma \vdash_{\Sigma\mbox{-}res} C_k$, if each $\Sigma$-clause in the sequence is one of the followings: (i) a $\Sigma$-clause in $\Gamma$, (ii) a $\Sigma$-resolvent of earlier $\Sigma$-clauses, (iii) a factor of an earlier $\Sigma$-clause in the sequence. Similarly, we say that $\Gamma$ yields a set $I$ of $\Sigma$-clauses by $\Sigma$-resolution, denoted by $\Gamma \vdash_{\Sigma\mbox{-}res} I$, if each $\Sigma$-clause in $I$ is derivable from $\Gamma$ by $\Sigma$-resolution.

We next describe a hybrid reasoning system~\cite{frisch1991,weidenbach1991} consisting of a sort module and OSFOL formulae, in which a sort module describes a sort hierarchy. A sort module is often described by a special language or using the first-order language~\cite{weidenbach1991,cohn1987}.  In this paper a sort module is described by FOL. To ensure that the $\Sigma$-resolution procedure is refutation-complete, it suffices that a sort module is of the \emph{definite program} form, in which the \emph{definite program} consists of \emph{definite clauses}~\cite{weibel1997,frisch1991}. (Recall that a definite clause is a clause that has exactly one positive literal~\cite{inoue2004}.)
In what follows we assume that a sort module for any set of $\Sigma$-clauses is of the definite program form, which is sufficient to express an essential sort hierarchy as discussed in~\cite{weibel1997}. 

\begin{thm}\cite{frisch1991,weibel1997}
\label{thm:refcomplete}
A set $\Gamma$ of $\Sigma$-clauses is $\Sigma$-unsatisfiable iff $\Gamma \vdash_{\Sigma\mbox{-}res}$ \scriptsize $\square$  .
\end{thm}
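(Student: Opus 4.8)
The plan is to prove the two directions separately: soundness (the ``if'' direction) is routine, and refutation-completeness (the ``only if'' direction) is the substantive half and is obtained by reduction to the unsorted case via relativization.

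For soundness I would argue by induction on the length of a derivation $C_1,\ldots,C_k={}$\scriptsize$\square$\normalsize\ that every $\Sigma$-structure $M$ that is a model of $\Gamma$ is also a model of each $C_i$. The only non-trivial cases are a $\Sigma$-resolvent and a factor, and in both the key point is that a $\Sigma$-mgu $\sigma$ is in particular a $\Sigma$-substitution, so that $M\models_\Sigma C$ implies $M\models_\Sigma C\sigma$: instantiating a well-sorted clause by a well-sorted substitution keeps every term ranging inside its declared sort, hence no variable assignment escapes the relevant $A_s$ and the universal-closure semantics is preserved. Given this, the usual propositional argument that $(C_1-m)\sigma\cup(C_2-n)\sigma$ and $(\{L_1\}\cup D)\sigma$ are satisfied by $M$ goes through verbatim. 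Since the empty clause has no model, $\Gamma$ must be $\Sigma$-unsatisfiable.

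For refutation-completeness I would reduce to ordinary FOL. Reading each $\Sigma$-clause of $\Gamma$ as a $\Sigma$-sentence, Theorem~\ref{thm:refcomp} gives that $\Sigma$-unsatisfiability of $\Gamma$ is equivalent to unsatisfiability of $\widehat{\Gamma}\cup\widehat{\Sigma}$ in unsorted FOL, and Robinson's refutation-completeness of unsorted resolution then yields an unsorted resolution refutation of $\widehat{\Gamma}\cup\widehat{\Sigma}$. The remaining task — and the main obstacle — is to transform this into a $\Sigma$-resolution refutation of $\Gamma$ alone. The idea is that every clause of $\widehat{\Sigma}$ is a \emph{definite} clause: the sort-hierarchy axioms $\forall x\,.\,s_1(x)\Rightarrow s_2(x)$, the constant declarations $s(c)$, and the function declarations $\forall x_1,\ldots,x_n\,.\,(s_1(x_1)\wedge\cdots\wedge s_n(x_n))\Rightarrow s(f(x_1,\ldots,x_n))$ each have exactly one positive literal, and by the standing assumption so is every clause of the sort module attached to $\Gamma$. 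Consequently positive sort literals $s(t)$ can only be produced from these definite clauses, so all inferences in the unsorted refutation that touch $\widehat{\Sigma}$-clauses are pure sort computations. One then checks that a $\Sigma$-term $t$ satisfies $[t]\preceq s$ exactly when $\widehat{\Sigma}$ together with the definite sort module derives $s(t)$, and that each such derivation is precisely what the well-sorted unification algorithm verifies when it accepts a $\Sigma$-mgu (the subsort tests $s''\preceq s'$ in routine~(4) and the greatest-common-subsort construction in routine~(5)). Hence every occurrence of sort reasoning in the unsorted refutation can be ``absorbed'' into the $\Sigma$-unification steps of a corresponding $\Sigma$-resolution derivation from $\Gamma$; the unitary-unifier hypothesis (existence of greatest lower bounds, augmented by synthetic sorts if necessary) guarantees that the single $\Sigma$-mgu used at each lifted step suffices, and definiteness of the sort module is exactly what rules out the known incompleteness phenomena. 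Carrying out this absorption/lifting bookkeeping in detail is the heart of the proof; it is the construction of~\cite{frisch1991,weibel1997,walther1988}, on which I would rely rather than reproduce in full.

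An equivalent, more self-contained route would bypass relativization: first establish an order-sorted Herbrand theorem (using the standing assumption that every sort other than $\bot$ has at least one ground term, so the order-sorted Herbrand universe is non-empty and correctly sorted), reducing $\Sigma$-unsatisfiability of $\Gamma$ to propositional unsatisfiability of a finite set $\Gamma_0$ of ground $\Sigma$-instances of clauses of $\Gamma$; then apply ground resolution completeness to refute $\Gamma_0$; and finally invoke an order-sorted lifting lemma stating that a refutation from ground $\Sigma$-instances $C_1\theta_1,\ldots,C_k\theta_k$ lifts to a $\Sigma$-resolution refutation from $C_1,\ldots,C_k$ using $\Sigma$-mgu's and factoring. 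Here too the crux is the lifting lemma, whose order-sorted form again needs the unitary $\Sigma$-unification property and the definiteness of the sort module. I would present whichever of the two routes is shorter given the machinery already in place and defer the detailed sort-computation steps to the cited literature.
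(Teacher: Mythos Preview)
The paper does not give its own proof of this theorem: it is stated in the preliminaries section as a result imported from~\cite{frisch1991,weibel1997}, with no accompanying \texttt{proof} environment. So there is nothing in the paper to compare your proposal against at the level of argument structure.

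That said, your sketch is in line with what those references actually do. The second route you describe --- Sorted Herbrand Theorem plus ground resolution completeness plus a sorted lifting lemma --- is exactly the machinery the paper imports as Theorems~\ref{thm:Herbrand} and~\ref{thm:Lifting} immediately after stating Theorem~\ref{thm:refcomplete}, and is the approach of~\cite{frisch1991}. Your first route via relativization and absorption of the definite sort module into $\Sigma$-unification is closer to~\cite{weibel1997,walther1988}. Both are standard; your identification of the definite-program assumption and the unitary-unifier (greatest-lower-bound) hypothesis as the hinges on which completeness turns is correct and matches the paper's remarks preceding the theorem. Since you explicitly defer the detailed lifting/absorption bookkeeping to the cited literature, your proposal is adequate as a proof outline and consistent with how the paper treats the result.
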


\begin{thm}\cite{frisch1991,weibel1997}
\label{thm:Herbrand}
Let $A$ be a set of $\Sigma$-clauses. Then, $A$ is $\Sigma$-unsatisfiable iff there is a finite $\Sigma$-unsatisfiable set $A^\prime$ of $\Sigma$-ground clauses of $A$.
\end{thm}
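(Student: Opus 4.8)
The plan is to prove the two implications separately; the left-to-right direction carries the order-sorted Herbrand/compactness content, while the converse is the routine ``ground instances are semantic consequences'' observation.

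For the converse direction, suppose $A^\prime$ is a finite $\Sigma$-unsatisfiable set of $\Sigma$-ground clauses of $A$. I would argue that any $\Sigma$-model of $A$ must also be a $\Sigma$-model of $A^\prime$, contradicting the $\Sigma$-unsatisfiability of $A^\prime$. Each clause of $A^\prime$ has the form $C\sigma$ with $C\in A$ and $\sigma$ a grounding $\Sigma$-substitution. Since a $\Sigma$-clause denotes the universal closure of its disjunction, the substitution (coincidence) lemma for OSFOL gives $M \models_\Sigma C$ implies $M \models_\Sigma C\sigma$: here one uses precisely that $\sigma$ is \emph{well-sorted}, so that for a variable $x\mbox{:}s$ occurring in $C$ the denotation $\llbracket \sigma(x\mbox{:}s)\rrbracket$ lies in $A_{[\sigma(x\mbox{:}s)]}\subseteq A_s$ and therefore induces a legitimate variable assignment instantiating $\forall x\mbox{:}s$. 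Hence a $\Sigma$-model of $A$ satisfies all of $A^\prime$, so $A$ is $\Sigma$-unsatisfiable.

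For the main direction, I would invoke the refutation completeness of $\Sigma$-resolution. If $A$ is $\Sigma$-unsatisfiable, Theorem~\ref{thm:refcomplete} yields a finite derivation $C_1,\ldots,C_n = $ \scriptsize$\square$\normalsize\ from $A$ by $\Sigma$-resolution. Such a derivation mentions only finitely many clauses of $A$ as leaves and only finitely many $\Sigma$-mgu's (from its $\Sigma$-resolvent and factoring steps). Tracking these unifiers through the derivation and, wherever variables still survive, instantiating them by arbitrary ground $\Sigma$-terms -- available because every sort other than $\bot$ has a ground $\Sigma$-term by our non-empty-sort assumption -- one extracts a finite set $A^\prime$ of $\Sigma$-ground instances of clauses of $A$ from which \scriptsize$\square$\normalsize\ is still derivable by (ground) $\Sigma$-resolution. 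By the soundness half of Theorem~\ref{thm:refcomplete} (if $\Gamma \vdash_{\Sigma\mbox{-}res}$ \scriptsize$\square$\normalsize\ then $\Gamma$ is $\Sigma$-unsatisfiable), the set $A^\prime$ is $\Sigma$-unsatisfiable, and it is by construction a finite set of $\Sigma$-ground clauses of $A$, which is what we wanted.

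An alternative, equivalent route is via relativization: applying Theorem~\ref{thm:refcomp} clause-wise (the model translation in its proof is uniform in the formula) reduces $\Sigma$-unsatisfiability of $A$ to unsatisfiability of $\widehat{A}\cup\widehat{\Sigma}$, to which the classical Herbrand theorem supplies a finite unsatisfiable set $G$ of unsorted ground instances; one then translates the $\widehat{A}$-instances in $G$ back into $\Sigma$-ground clauses of $A$ and transfers unsatisfiability back along the same correspondence. I expect the main obstacle to lie exactly here, and in disguised form in the grounding step of the resolution argument above: one must reconcile the unsorted Herbrand universe of $\widehat{\Sigma}$ with the well-sorted ground $\Sigma$-terms, i.e. argue that the ground term substituted for a relativized variable originating from $x\mbox{:}s$ may be taken well-sorted of a sort $\preceq s$, so that the resulting instance is a genuine $\Sigma$-ground clause. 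This is where the standing assumption that the sort module is a definite program does the work, since then the sort predicates $s(t)$ are satisfied only by well-sorted $t$ and the sort computation behaves deterministically; the remaining bookkeeping is routine. (Note Lemma~\ref{lem:skolem} is not needed here, as $A$ is already a set of $\Sigma$-clauses.)
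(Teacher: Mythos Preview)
The paper does not give its own proof of this theorem: it is stated in the preliminaries with a citation to Frisch (1991) and Weibel (1997) and then simply used as a black box (e.g.\ in the proof of Theorem~\ref{thm:craig}). So there is no ``paper's proof'' to compare your proposal against.

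That said, a remark on your first route is in order. You derive the Sorted Herbrand Theorem from the refutation completeness of $\Sigma$-resolution (Theorem~\ref{thm:refcomplete}). In the cited sources, and in essentially every standard development, the dependency runs the other way: one first establishes the sorted Herbrand theorem (via relativization and unsorted compactness/Herbrand, as in your second route), then the Lifting Theorem, and only then refutation completeness. Invoking Theorem~\ref{thm:refcomplete} to prove Theorem~\ref{thm:Herbrand} is therefore circular in the underlying literature, even though the present paper lists both as imported results without displaying their internal dependencies. Your second route via Theorem~\ref{thm:refcomp} and the classical Herbrand theorem is the non-circular one and is essentially how Frisch and Weibel proceed; your identification of the key step---showing that the ground substitutions produced on the unsorted side may be taken well-sorted, using that the sort module is a definite program---is exactly the point that carries the argument. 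The converse direction you give is correct and standard.
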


Theorem~\ref{thm:refcomplete} describes the refutation-completeness of the $\Sigma$-resolution procedure. Meanwhile, Theorem~\ref{thm:Herbrand}, called the \emph{Sorted Herbrand Theorem}, says that the Herbrand Theorem~\cite{robinson1965} can be extended to a set of $\Sigma$-clauses.

We say that $\Sigma$-clause $C$ \emph{subsumes} $\Sigma$-clause $C^\prime$, denoted by $C \geq_{\Sigma} C^\prime$, if there is a $\Sigma$-substitution $\theta$ such that $C\theta \subseteq C^\prime$~\cite{plotkin1970,frisch1991}. Note that the empty clause subsumes every $\Sigma$-clause. A set $A$ of $\Sigma$-clauses subsumes a set $B$ of $\Sigma$-clauses, denoted by $A \geq_{\Sigma} B$, if every $\Sigma$-clause in $B$ is subsumed by some $\Sigma$-clause in $A$~\cite{slagle1970}. The \emph{Lifting Theorem} for $\Sigma$-resolution is shown in the following theorem. 

\begin{thm}\cite{frisch1991}
\label{thm:Lifting}
Let $\Gamma$ be a set of $\Sigma$-clauses and $\Gamma^\prime$ be the set of the corresponding $\Sigma$-ground clauses of $\Gamma$. Let $C^\prime$ be a $\Sigma$-ground clause such that $\Gamma^\prime \vdash_{\Sigma\mbox{-}res} C^\prime$. Then, there is a $\Sigma$-clause $C \geq_{\Sigma}C^\prime$ such that  $\Gamma \vdash_{\Sigma\mbox{-}res} C$.
\end{thm}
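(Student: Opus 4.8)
The plan is to prove this by induction on the length $k$ of a $\Sigma$-resolution derivation $C_1',\dots,C_k'=C'$ of $C'$ from $\Gamma'$, after first isolating the one-step fact as a lemma: \emph{if $D_1\geq_{\Sigma}D_1'$ and $D_2\geq_{\Sigma}D_2'$ with $\Gamma\vdash_{\Sigma\mbox{-}res}D_1$ and $\Gamma\vdash_{\Sigma\mbox{-}res}D_2$, and $C'$ is a $\Sigma$-resolvent of the $\Sigma$-ground clauses $D_1'$ and $D_2'$, then some $\Sigma$-clause $C$ satisfies $\Gamma\vdash_{\Sigma\mbox{-}res}C$ and $C\geq_{\Sigma}C'$.} Throughout, any two clauses entering a resolution step are first renamed so that their variables are standardized apart; since a renaming is an injective, sort-preserving (hence $\Sigma$-) substitution with a $\Sigma$-inverse, this affects neither derivability nor the subsumption relation.

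For the lemma, pick $\Sigma$-substitutions $\theta_1,\theta_2$ with $D_1\theta_1\subseteq D_1'$ and $D_2\theta_2\subseteq D_2'$; as the variables are apart we may assume $\theta_1,\theta_2$ have disjoint domains and set $\theta=\theta_1\cup\theta_2$, again a $\Sigma$-substitution. Write the ground resolvent as $C'=(D_1'-\{\ell'\})\cup(D_2'-\{\neg\ell'\})$ for the resolved literal $\ell'$, and put $m=\{L\in D_1\mid L\theta=\ell'\}$ and $n=\{L\in D_2\mid L\theta=\neg\ell'\}$. If $m=\emptyset$ then $\ell'\notin D_1\theta_1$, so $D_1\theta_1\subseteq D_1'-\{\ell'\}\subseteq C'$ and $C=D_1$ works; symmetrically if $n=\emptyset$. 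Otherwise $m\theta=\{\ell'\}$ and $(\neg n)\theta=\{\ell'\}$, so $\theta$ is a $\Sigma$-unifier of $m\cup\neg n$; by the standing assumption that every $\Sigma$-unifiable set has a $\Sigma$-mgu (secured by the existence of greatest lower bounds in $(S,\preceq)$ together with the order-sorted unification algorithm), there are a $\Sigma$-mgu $\sigma$ of $m\cup\neg n$ and a $\Sigma$-substitution $\lambda$ with $\theta=\sigma\lambda$ on the relevant variables. Then $C:=(D_1-m)\sigma\cup(D_2-n)\sigma$ is a $\Sigma$-resolvent of $D_1$ and $D_2$, whence $\Gamma\vdash_{\Sigma\mbox{-}res}C$, and $C\lambda=(D_1-m)\theta\cup(D_2-n)\theta\subseteq(D_1'-\{\ell'\})\cup(D_2'-\{\neg\ell'\})=C'$, so $C\geq_{\Sigma}C'$.

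The induction is then straightforward. If $C'$ is a $\Sigma$-clause of $\Gamma'$, it is a $\Sigma$-ground instance $C\theta$ of some $C\in\Gamma$, so $C\geq_{\Sigma}C'$ and $\Gamma\vdash_{\Sigma\mbox{-}res}C$ hold trivially. If $C'$ is a $\Sigma$-resolvent of earlier clauses $D_1',D_2'$ in the derivation, the induction hypothesis gives $D_1\geq_{\Sigma}D_1'$ and $D_2\geq_{\Sigma}D_2'$ derivable from $\Gamma$, and the lemma produces the required $C$. If $C'$ is a factor of an earlier $\Sigma$-ground clause $D'$, then since two ground literals are $\Sigma$-unifiable only when identical, a $\Sigma$-factor of $D'$ equals $D'$; hence $C'=D'$ and the induction hypothesis already supplies $C$.

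I expect the only genuinely order-sorted point — and the crux of the argument — to be the factorization $\theta=\sigma\lambda$ through the $\Sigma$-mgu with $\lambda$ again a $\Sigma$-substitution; this is exactly what the excerpt's greatest-lower-bound hypothesis on $(S,\preceq)$ (with synthetic bounds added if needed) buys us. The remaining novelty over the unsorted Lifting Theorem is the mild extra care forced by the induction hypothesis yielding $D_i\theta_i\subseteq D_i'$ rather than equality, which is why the cases $m=\emptyset$ and $n=\emptyset$ must be separated; verifying that $\theta_1,\theta_2$, their union, the renamings, and $\lambda$ are all well-sorted is routine from the definition of $\Sigma$-substitution and the standing assumptions.
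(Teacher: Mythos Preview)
The paper does not give its own proof of this theorem; it is stated as a known result and attributed to \cite{frisch1991}. There is therefore nothing in the paper to compare your argument against directly.

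That said, your proposal is the standard lifting argument and is correct. The induction on derivation length together with the one-step lifting lemma is exactly how the unsorted Lifting Theorem is proved, and you have correctly located the only genuinely order-sorted ingredient: the factorization $\theta=\sigma\lambda$ through a $\Sigma$-mgu with $\lambda$ again a $\Sigma$-substitution, which is precisely what the paper's standing assumption on greatest lower bounds in $(S,\preceq)$ guarantees. Your handling of the subsumption (rather than equality) from the induction hypothesis via the $m=\emptyset$ and $n=\emptyset$ cases is the right refinement, and your observation that ground factoring is trivial because clauses are sets is also sound. This is in line with what one would find in the cited source.
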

\subsection{A signature network on a distributed agent graph}

We use the following definitions involving a signature network and knowledge base found in~\cite{keisler2012a,keisler2012b,mcilraith2001}.

Let $G=(V, E)$ be a directed graph that consists of a finite set $V$ of vertices and a set $E \subseteq V\times V$ of edges. A directed graph $G=(V, E)$ has a \emph{decider} $D \in V$ if there exists a path from every other vertex $x \in V$ to $D$. A directed graph $G=(V, E)$ is called a \emph{pointed graph} if it has at least one decider. 

Let $V$ denote a set of agents and $E$ denote their communication links. A \emph{signature network} on $G=(V, E)$ is defined as an object $\mathbb{S}=(V, E, L(\cdot))$, in which $G$ is a pointed graph and $L(\cdot)$ is a labeling that assigns a signature $L(a)$ to each agent $a\in V$. We let $\mathcal{L}(a)$ denote the language built with $L(a)$ and call the set $\mathcal{L}(a)$ the \emph{language of agent} $a$ for $a\in V$. Given a signature network $\mathbb{S} = (V, E, L(\cdot))$, a \emph{knowledge base} (or \emph{theory}) \emph{over} $\mathbb{S}$ is defined as an object $\mathbb{K}=(V, E, L(\cdot), K(\cdot))$, in which $K(\cdot)$ is a labeling that assigns a knowledge base $K(a) \subset \mathcal{L}(a)$ to each agent $a \in V$. The \emph{combined signature} $L(V)$ and \emph{combined knowledge base} $K(V)$ are defined as $L(V)=\bigcup_{a\in V}L(a)$ and $K(V)=\bigcup_{a \in V}K(a)$, respectively. For any set $\Omega$ of non-logical symbols, a signature network $\mathbb{S}=(V, E, L(\cdot))$ is said to have the \emph{peak property} if the subgraph comprising those agents in $V$ whose signature contains $\Omega$ has a decider.

A signature network on a directed tree $G=(V, E)$ is called a \emph{signature tree} if it has the peak property and for every vertex $x$, which is not a decider, there is a unique edge $(x, y) \in E$. Therefore, a signature tree has a unique decider $D$ such that for every vertex $x \neq D$, there is a unique path from $x$ to $D$.

We define a \emph{distributed agent graph} $G_d=(V_d, E_d)$ as a directed acyclic graph of distributed agents, in which each vertex $v \in V_d$ denotes a distributed agent and $(u, v) \in E_d$ denotes that agent $u$ reports its knowledge or observations to agent $v$. Each vertex of $V_d$ has a unique label denoting its identifier in $G_d$. We say that agent $v$ is an \emph{immediate successor} of agent $u$, and $u$ is an \emph{immediate predecessor} of agent $v$ if $(u, v) \in E_d$, A \emph{source agent} is an agent that has no immediate predecessor. We assume throughout that $G_d=(V_d, E_d)$ is a distributed agent graph.

In this paper we consider a signature network and knowledge base on $G_d=(V_d, E_d)$. For instance, $\mathbb{S}=(V_d, E_d, L(\cdot))$ denotes a signature network on $G_d$ and $\mathbb{K}=(V_d, E_d, L(\cdot), K(\cdot))$ denotes a knowledge base over $\mathbb{S}=(V_d, E_d, L(\cdot))$. In the remainder of this paper $P(u)$ denotes the set of predicate symbols in $L(u)$ for $u \in V_d$, $P(C)$ denotes the set of predicate symbols in clause $C$, and $P(u, v)$ denotes the set $P(u) \cap P(v)$ for $u, v \in V_d$. Similarly, $L(C)$ denotes the set of non-logical symbols in clause $C$, $l(u, v)$ denotes the set $L(u) \cap L(v)$ for $u, v \in V_d$, and $\mathcal{L}(l(u, v))$ denotes the corresponding language built with $l(u, v)$. For a $\Sigma$-formula $w$, $L(w)$ denotes the set of non-logical symbols in $w$. We assume throughout that $P(u, v) \neq \emptyset$ for each edge $(u, v) \in E_d$ in a signature network $\mathbb{S}=(V_d, E_d, L(\cdot))$ and that each knowledge base is consistent before the query is given to a decider. 
\section{The OSFOL report procedure for automated theorem proving}
\label{section:agentTP}
In this section we present our distributed agent-based automated theorem proving framework based on OSFOL, where each distributed agent in a network reports knowledge or observations to its neighboring agent(s) in order to build proofs using message-passing algorithms. Each agent in our framework has its own knowledge base, communicating to its neighboring agent(s) by means of their common language in a distributed environment.

\begin{algor}\normalfont OSFOL-SEND$(u, v, K^\prime(u))$\\
\setlist{nosep}
\noindent Input: $u$, $v$ for $(u, v) \in E_d$ in $G_d=(V_d, E_d)$, and a set $K^\prime(u)$ of $\Sigma$-clauses in $\mathcal{L}(u)$.
\begin{itemize}[leftmargin=*]
\item Create sets $U$ and $W$ if they do not already exist. Initialize them to $\emptyset$.
\item For each $\Sigma$-clause $C \in K^\prime(u)$,
\begin{itemize}
\item If $P(C) \subset P(u, v)$, then set $X \leftarrow L(C)\setminus l(u, v)$. ($X$ is the set of function or constant symbols in $L(C)$ that cannot be sent directly to agent $v$.)
\item If $X = \emptyset$ or $C$ is the empty clause, then add $C$ to a set $U$. Otherwise, add $C$ to a set $W$.
\end{itemize}
\item If $W$ is non-empty, then un-Skolemize $W$ into $W^\prime$. If successful and $L(w) \subset l(u, v)$ for every un-Skolemized $\Sigma$-formula $w \in W^\prime$, then add each element of $W^\prime$ to $U$. Otherwise, return failure.
\item Send $U$ to agent $v$.
\end{itemize}
\label{Algorithm:rmsending}
\end{algor}

Algorithm~\ref{Algorithm:rmsending}, which is based on the FOL resolution-based message-sending procedure discussed in~\cite{amir2005}, describes the OSFOL message-sending procedure from agent $u$ to agent $v$ for $(u, v) \in E_d$ in a distributed agent graph $G_d=(V_d, E_d)$. When a $\Sigma$-clause $\phi \notin \mathcal{L}(l(u, v))$ for $(u, v) \in E_d$, agent $u$ cannot send $\phi$ to agent $v$ directly because it is not in their common language. In this case an un-Skolemized $\Sigma$-formula $\phi^\prime$ of $\phi$ can be sent from agent $u$ to agent $v$ if $P(\phi)\subset P(u, v)$ and $\phi^\prime \in \mathcal{L}(l(u, v))$. Specifically, if $P(\phi)\subset P(u, v)$, it is desirable to ensure that $L(\phi^\prime) \subset l(u, v)$. To the best of our knowledge, there is no known un-Skolemization procedure for a $\Sigma$-clause set. We use our un-Skolemization procedure for a $\Sigma$-clause set in Algorithm~\ref{Algorithm:rmsending}, which is mostly based on the McCune's un-Skolemization procedure~\cite{mccune1988} for an FOL clause set. Although the McCune's un-Skolemization procedure for an FOL clause set is sound~\cite{mccune1988}, not every clause set can be un-Skolemized. Similary, not every $\Sigma$-clause set can be un-Skolemized in our un-Skolemization procedure. Therefore, we restrict the input of our un-Skolemization procedure and assume throughout that functions used in Algorithm~\ref{Algorithm:rmsending} are \emph{acceptable} for our un-Skolemization procedure, where every Skolem function is naturally acceptable to our un-Skolemization procedure.  We now briefly discusses our un-Skolemization procedure for a $\Sigma$-clause set, which is based on the McCune's un-Skolemization procedure~\cite{mccune1988} for an FOL clause set. We first describe the un-Skolemization procedure for a single $\Sigma$-clause. The sufficient conditions for un-Skolemizing a single $\Sigma$-clause in this paper are as follows:

\begin{enumerate}[(i)]
\item For each function symbol considered by un-Skolemization, its (variable) arguments are all distinct and it does not contain any non-variable term argument.
\item Let $f$ be an $m$-ary function symbol and $g$ be an $n$-ary function symbol such that $m \leq n$. Then, the set of  (variable) arguments of $f$ are contained in the set of (variable) arguments of $g$.
\item No two functions headed by the identical function symbol appear together in a single $\Sigma$-clause.
\end{enumerate}

We say that functions that satisfy the above conditions are \emph{acceptable} for our un-Skolemization procedure. Now the un-Skolemization procedure for a single $\Sigma$-clause is the reverse of the Skolemization procedure for a single $\Sigma$-clause discussed in Section~\ref{sec:resolution}. For instance, let $D \in P_{s_1s_2s_3}$ and consider symbols $f, g$, and $h$ for un-Skolemization. We see that $D(f, g(x\mbox{:}s_1), h(x\mbox{:}s_1, y\mbox{:}s_2)) \wedge s_1(f) \wedge s_2(g(x\mbox{:}s_1)) \wedge s_3(h(x\mbox{:}s_1, y\mbox{:}s_2))$ is un-Skolemized to $\exists v_1\mbox{:}s_1 \forall x\mbox{:}s_1 \exists v_2\mbox{:}s_2 \forall y\mbox{:}s_2 \exists v_3\mbox{:}s_3\,.\,D(v_1, v_2, v_3)$.

The un-Skolemization procedure for a $\Sigma$-clause set is basically the same with the un-
Skolemization procedure for an FOL clause set~\cite{mccune1988} except the consideration of sorts. 
The following steps summarize the un-Skolemization procedure for a $\Sigma$-clause set. We call each function (respectively, function symbol) considered by un-Skolemization as a \emph{Skolem expression} (respectively, \emph{Skolem symbol}). We assume that each $\Sigma$-clause in a $\Sigma$-clause set satisfies the above conditions (i)--(iii). Therefore, if two Skolem expressions have a common Skolem symbol, they are originated in two different $\Sigma$-clauses.
\begin{enumerate}[(1)]
\item Maximally partition a $\Sigma$-clause set in such a manner that no two partitions share a Skolem symbol. Then, for each partition, perform the following steps (2)--(6).
\item Rename variables in such a manner that two variable sets from any pair of $\Sigma$-clauses are disjoint.
\item For each identical function symbol found in the set of Skolem expressions, unify the set of Skolem expressions headed by that function symbol such that only one Skolem expression remains for each function symbol and that the (variable) arguments of the resulting Skolem expression are all distinct. If successful, the unifying substitution is applied to the entire partition. Otherwise, return failure. Note that the unifying substitution here is only a renaming of variables of the same sort rather than $\Sigma$-substitution.
\item Make every $n$-ary Skolem expression have the same (variable) arguments by unification. The order of its (variable) arguments is irrelevant here. Furthermore, if $f$ is an $n$-ary Skolem symbol and $g$ is an $m$-ary Skolem symbol such that $n \leq m$, force the set of (variable) arguments of $f$ to be contained in the set of (variable) arguments of $g$. If successful, the unifying substitution is applied to the entire partition. Otherwise, return failure.
\item For all Skolem expressions in the partition, construct the quantifier prefix and replace Skolem expressions with the corresponding existentially quantified (sorted) variables.
\item Add the resulting un-Skolemized $\Sigma$-formulae to the set $W$. 
\item If every partition can be un-Skolemized by steps (2)--(6), the resulting set $W$ is interpreted as a conjunction of those un-Skolemized $\Sigma$-formulae. 
\end{enumerate}

We next give an example to illustrate the above steps. Sort predicates that appear in the un-Skolemization procedure for a single $\Sigma$-clause are omitted because it is clear from the corresponding predicate symbols in the signature. Now consider the following partition of three $\Sigma$-clauses. For $p \in P_{s_1s_2s_3s_4}$, $q \in P_{s_1s_2s_3s_4}$, and $r \in P_{s_1s_2s_3s_4}$,

\indent 1. $p(x_1\mbox{:}s_1, x_2\mbox{:}s_2, f_1(x_1\mbox{:}s_1), g_1(x_2\mbox{:}s_2, x_1\mbox{:}s_1))$,\\
\indent 2. $q(y_1\mbox{:}s_1, y_2\mbox{:}s_2, f_2(y_1\mbox{:}s_1), g_1(y_2\mbox{:}s_2, y_1\mbox{:}s_1))$,\\
\indent 3. $r(z_1\mbox{:}s_1, z_2\mbox{:}s_2, f_2(z_1\mbox{:}s_1), g_2(z_1\mbox{:}s_1, z_2\mbox{:}s_2))$.\\

\noindent After applying the step (3) procedure with substitution $\{y_1\mbox{:}s_1/x_1\mbox{:}s_1, y_2\mbox{:}s_2/x_2\mbox{:}s_2\}$ and  $\{z_1\mbox{:}s_1/x_1\mbox{:}s_1\}$ for $g_1$ and $f_2$, respectively, the above $\Sigma$-clauses become as follows:

\indent 1. $p(x_1\mbox{:}s_1, x_2\mbox{:}s_2, f_1(x_1\mbox{:}s_1), g_1(x_2\mbox{:}s_2, x_1\mbox{:}s_1))$,\\
\indent 2. $q(x_1\mbox{:}s_1, x_2\mbox{:}s_2, f_2(x_1\mbox{:}s_1), g_1(x_2\mbox{:}s_2, x_1\mbox{:}s_1))$,\\
\indent 3. $r(x_1\mbox{:}s_1, z_2\mbox{:}s_2, f_2(x_1\mbox{:}s_1), g_2(x_1\mbox{:}s_1, z_2\mbox{:}s_2))$.\\

\noindent After applying the step (4) procedure with substitution $\{z_2\mbox{:}s_2/x_2\mbox{:}s_2\}$, the above $\Sigma$-clauses become as follows:

\indent 1. $p(x_1\mbox{:}s_1, x_2\mbox{:}s_2, f_1(x_1\mbox{:}s_1), g_1(x_2\mbox{:}s_2, x_1\mbox{:}s_1))$,\\
\indent 2. $q(x_1\mbox{:}s_1, x_2\mbox{:}s_2, f_2(x_1\mbox{:}s_1), g_1(x_2\mbox{:}s_2, x_1\mbox{:}s_1))$,\\
\indent 3. $r(x_1\mbox{:}s_1, x_2\mbox{:}s_2, f_2(x_1\mbox{:}s_1), g_2(x_1\mbox{:}s_1, x_2\mbox{:}s_2))$.\\

\noindent After applying the steps (5) and (6) procedure, the set $W$ of un-Skolemized $\Sigma$-formulae becomes as follows: \\
$\{\text{Q\,.\,} p(x_1\mbox{:}s_1, x_2\mbox{:}s_2, v_1\mbox{:}s_3, v_3\mbox{:}s_4), \text{Q\,.\,} q(x_1\mbox{:}s_1, x_2\mbox{:}s_2, v_2\mbox{:}s_3, v_3\mbox{:}s_4), \text{Q\,.\,} r(x_1\mbox{:}s_1, x_2\mbox{:}s_2, v_2\mbox{:}s_3, v_4\mbox{:}s_4)\}$, where the quantifier prefix $\text{Q}=\forall x_1\mbox{:}s_1\exists v_1\mbox{:}s_3\exists v_2\mbox{:}s_3\forall x_2\mbox{:}s_2 \exists v_3\mbox{:}s_4\exists v_4\mbox{:}s_4$. The following lemma says that the un-Skolemization procedure for a $\Sigma$-clause set is sound.

\begin{lem}
\label{lem:unskolemization}
If the un-Skolemization procedure for a $\Sigma$-clause set succeeds and yields a set $W$ of un-Skolemized $\Sigma$-formulae, then $W$ is $\Sigma$-unsatisfiable iff the original $\Sigma$-clause set is $\Sigma$-unsatisfiable.
\end{lem}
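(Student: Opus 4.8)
The plan is to show that, modulo transformations that preserve $\Sigma$-satisfiability, the un-Skolemization procedure inverts the OSFOL Skolemization procedure of Section~\ref{sec:resolution}, and then to invoke Lemma~\ref{lem:skolem}. Write $\Gamma$ for the input $\Sigma$-clause set. By step (1) we have $\Gamma=\Gamma_1\cup\dots\cup\Gamma_k$ with no two $\Gamma_j$ sharing a Skolem symbol, and steps (2)--(6) turn each $\Gamma_j$ into an un-Skolemized $\Sigma$-sentence $w_j$ --- the common prefix $\mathrm{Q}_j$ placed in front of the conjunction of the Skolem-free clause matrices, as in the example preceding the lemma --- so that $W$ is the $\Sigma$-sentence $w_1\wedge\dots\wedge w_k$. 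The crux is the claim $(\star)$: for each $j$, Skolemizing $w_j$ with the Skolem symbols chosen to be exactly those occurring in $\Gamma_j$ produces $\Gamma_j$ up to (a) renaming of variables and (b) permutation of the argument lists of Skolem functions. Granting $(\star)$, the disjointness of the Skolem symbols of distinct $\Gamma_j$, together with the fact that the $w_j$ may be taken variable-disjoint, lets one Skolemize $W$ conjunct by conjunct and so recover $\Gamma$ up to (a) and (b); since (a) and (b) preserve $\Sigma$-satisfiability --- rename, or re-interpret with permuted argument lists (carrying the argument sorts along), the symbols in question in any $\Sigma$-structure --- and since by Lemma~\ref{lem:skolem} the $\Sigma$-sentence $W$ is $\Sigma$-unsatisfiable iff its Skolemization is, it follows that $W$ is $\Sigma$-unsatisfiable iff $\Gamma$ is.

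To prove $(\star)$ I would fix a partition $\Gamma_j$ and proceed along steps (2)--(6). The substitutions in steps (2)--(4) are, as the procedure stresses, renamings of variables by variables of the same sort; after step (2) the clauses of $\Gamma_j$ are pairwise variable-disjoint, so each such renaming acts clausewise as a bijective renaming of universally quantified variables and yields a $\Sigma$-equivalent clause (a variable $x\mbox{:}s$ and a same-sort variable $y\mbox{:}s$ range over the same carrier $A_s$), leaving every Skolem symbol intact; hence the transformed partition $\Gamma_j'$ is $\Sigma$-equivalent to $\Gamma_j$ clause by clause. After steps (3)--(4) each Skolem symbol $f$ occurring in $\Gamma_j'$ has a single argument list $\bar x_f$ of distinct variables (every occurrence of $f$ is $f(\bar x_f)$, up to the irrelevant order of $\bar x_f$), and --- this is exactly where conditions (i)--(iii) with step (4) are used --- the variable sets underlying the $\bar x_f$ form an inclusion chain. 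One may therefore order the universal variables $x_1\mbox{:}t_1,\dots,x_m\mbox{:}t_m$ so that each such set is an initial segment $\{x_1,\dots,x_{n_f}\}$, and build $\mathrm{Q}_j$ by emitting $\forall x_i\mbox{:}t_i$ for $i=1,\dots,m$ and inserting after $\forall x_{n_f}\mbox{:}t_{n_f}$ a quantifier $\exists v_f\mbox{:}s_f$ for each $f$ with $n_f=i$, where $s_f$ is the output sort of $f$, i.e.\ the sort recorded by the atom $s_f(f(\bar x_f))$ that OSFOL Skolemization attached; replacing each $f(\bar x_f)$ by $v_f\mbox{:}s_f$ and deleting the sort atoms gives the matrices, and $w_j$ is $\mathrm{Q}_j$ applied to their conjunction. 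Then $\exists v_f\mbox{:}s_f$ lies in the scope of exactly the universals $\forall\bar x_f$ and of no other universal quantifier, so Skolemizing $w_j$ replaces it by $f(\bar x_f)\wedge s_f(f(\bar x_f))$, which is precisely an occurrence in $\Gamma_j'$; doing this for every $f$ returns all clauses of $\Gamma_j'$, hence $\Gamma_j$ up to (a) and (b). This proves $(\star)$, and the lemma follows.

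The hard part is the last verification --- that the prefix construction genuinely inverts OSFOL Skolemization --- and three points deserve care there. First, each $\exists v_f\mbox{:}s_f$ must be placed so that $v_f$ depends on exactly the universal variables of $\bar x_f$; this is possible only because conditions (i)--(iii) with step (4) force the sets $\{\bar x_f\}$ into an inclusion chain, and when no such chain exists the procedure correctly returns failure. Second, the sorted bookkeeping absent from McCune's first-order procedure must match each existential variable's sort to the output sort of the Skolem symbol it replaces, and must delete and later restore the sort atoms $s_f(f(\bar x_f))$ exactly. Third, a Skolem symbol shared by several clauses of a partition must be un-Skolemized to a single $\exists v_f$ whose scope is the whole conjunction $M_j$, so that re-Skolemization reproduces the same symbol in all of those clauses; this is why the per-partition output is $\mathrm{Q}_j$ applied to the conjunction of the matrices --- rather than the conjunction of the separately quantified matrices --- and why step (1) groups all clauses sharing a Skolem symbol into one partition. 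The standing assumption that every sort other than $\bot$ is non-empty is what keeps a vacuous $\exists v_f\mbox{:}s_f$ (for an $f$ absent from some conjunct of $M_j$) harmless.
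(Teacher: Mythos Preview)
Your argument is correct and follows the same line as the paper's own proof: steps (1)--(4) preserve logical equivalence because the substitutions involved are only same-sort variable renamings acting on variable-disjoint clauses, and step (5) is the inverse of Skolemization, so Lemma~\ref{lem:skolem} finishes. The paper merely asserts the inversion claim (your $(\star)$) in one sentence, whereas you verify it carefully; in particular your insistence that the per-partition prefix $\mathrm{Q}_j$ scope over the \emph{conjunction} of the matrices---so that a Skolem symbol shared by several clauses is un-Skolemized to a single existential---is exactly what is needed for re-Skolemization to return the original partition, a point the paper's brief proof and worked example leave implicit.
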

\noindent \emph{Remarks.} McCune presented the un-Skolemization procedure for an FOL clause set and showed that it is sound~\cite{mccune1988}. The proof of Lemma~\ref{lem:unskolemization} is produced by transforming the McCune's proof~\cite{mccune1988} involving an FOL clause set into one involving a $\Sigma$-clause set. Note that our un-Skolemization procedure for a $\Sigma$-clause set does not involve any equality used in the McCune's un-Skolemized procedure by using the restricted form of Skolem expressions (see (i)--(iii)).

\begin{proof}
Assume that the procedure succeeds and yields a set of un-Skolemized $\Sigma$-formulae. We show that steps (3)--(4) preserve logical equivalence and step (5) preserves the $\Sigma$-unsatisfiability. It is easy to see that steps (1)--(2) and steps (6)--(7) preserve logical equivalence.

Unification procedures in steps (3)--(4) rename variables of the same sort if successful and do not attempt to unify two different (sorted) variables in the same $\Sigma$-clause (see (i)--(iii)). Thus, steps (3)--(4) preserve logical equivalence. Since a set of $\Sigma$-clauses at the start of step (5) is a Skolemization of a resulting set of un-Skolemized $\Sigma$-formulae yielded by step (5), the $\Sigma$-unsatisfiability is preserved at step (5) by Lemma~\ref{lem:skolem}.
\end{proof}

The following algorithm describes the OSFOL message-receiving procedure for agent $v$, which processes a set of the received $\Sigma$-formulae from agent $u$ for $(u, v)\in E_d$ in $G_d=(V_d, E_d)$ by using the Skolemization procedure.

\begin{algor}\normalfont OSFOL-RECV$(U, u, v, K^\prime(v))$\\
\setlist{nosep}
\noindent Input: $u$, $v$ for $(u, v) \in E_d$ in $G_d=(V_d, E_d)$, a set $K^\prime(v)$ of $\Sigma$-clauses in $\mathcal{L}(v)$, and a set $U$ of $\Sigma$-formulae received from agent $u$.
\begin{itemize}[leftmargin=*]
\item Skolemize a set $U$ of $\Sigma$-formulae into $U^\prime$.
\item For each $\Sigma$-clause $C \in U^\prime$, add $C$ to $K^\prime(v)$.
\end{itemize}
\label{Algorithm:rmreceiving}
\end{algor}

The OSFOL resolution-based report procedure incorporates Algorithm~\ref{Algorithm:rmsending} and~\ref{Algorithm:rmreceiving} to prove a query in CNF using the $\Sigma$-resolution procedure. In Algorithm~\ref{Algorithm:rmreport} each $K^\prime(a)$ for $a \in V_d$ is composed of knowledge base $K(a)$ and its associated temporary knowledge base to save $\Sigma$-resolvents, etc. If a decider $D$ proves query $Q$ in CNF by Algorithm~\ref{Algorithm:rmreport}, it adds $Q$ to $K(D)$. Then, each temporary knowledge base built during an automated theorem proving procedure is removed. In what follows we assume that each query and its negation are given under CNF. 

\begin{algor}\normalfont OSFOL-REPORT$(G_d, (K(i))_{i\in V_d}, D, Q)$\\
\setlist{nosep}
\noindent Input: A distributed agent graph $G_d=(V_d, E_d)$, the collection of knowledge bases $(K(i))_{i\in V_d}$, a decider $D \in V_d$, and query $Q \in \mathcal{L}(D)$.
\begin{itemize}[leftmargin=*]
\item For each agent $a \in V_d$, construct $K^\prime(a)$ from $K(a)$. Add $\neg Q$ to $K^\prime(D)$.
\item Let $d(a, D)$ be the corresponding distance function from agent $a$ in $V_d$ to the decider agent $D$. Find some agent $u$ such that $d(u, D)$ is maximum. 
\item Concurrently, for every $(u, v) \in E_d$ such that $d(u, D)>d(v, D)$,
\begin{itemize}
\item Agent $u$:\footnote{If $|V_d|=1$, a distributed agent graph has a unique agent, which is the decider. In this case run this subroutine directly with $u=D$ after adding $\neg Q$ to $K^\prime(D)$ that has been constructed from $K(D)$.},\\
\noindent Perform the $\Sigma$-resolution procedure and add $\Sigma$-resolvents to $K^\prime(u)$. \\
\noindent Case $u = D$: If the empty clause can be derived, return success. Otherwise, return failure.\\
\noindent Case $u \neq D$: Call OSFOL-SEND($u$, $v$, $K^\prime(u)$), where $(u, v) \in E_d$.
\item Agent $v$:\\
\noindent When a set $U$ of $\Sigma$-formulae arrives from agent $u$, call OSFOL-RECV$(U, \\u, v, K^\prime(v))$. Once the receiving procedure has been completed, set $u:=v$ and continue the loop.
\end{itemize}
\end{itemize}
\label{Algorithm:rmreport}
\end{algor}

Now consider what happens when Algorithm~\ref{Algorithm:rmreport} runs on a signature tree $\mathbb{S}=(V_d, E_d, L(\cdot))$ for $|V_d| \geq 2$. Each source agent performs the $\Sigma$-resolution procedure and then calls the OSFOL-SEND procedure in order to send $\Sigma$-clauses including $\Sigma$-resolvents to its unique immediate successor. Note that it does not call the OSFOL-RECV procedure at all. Meanwhile, the decider agent performs the OSFOL-RECV procedure and the $\Sigma$-resolution procedure, but does not call the OSFOL-SEND procedure. Other kinds of agents perform the OSFOL-RECV procedure, the $\Sigma$-resolution procedure, and the OSFOL-SEND procedure when running Algorithm~\ref{Algorithm:rmreport} on the signature tree.\\
\indent In our approach  different report facilities can be assigned to a group of agents by restricting the language of each agent and its communications in a well-defined manner. Each agent is not allowed to report its knowledge or observations beyond its language. This approach has in common with a \emph{syslog}~\cite{singer2004} system logger in a UNIX environment in that different facilities (i.e. kernel, ftp, mail, etc.) are handled differently based on their configurations. However, the reporting capability of each agent can be predefined at a language level rather than a system-specific configuration level in our approach.

\begin{figure}[ht] 
{\small
$\bf{Sort\;module\;(FOL\;representation)\mbox{:} }$ $F(\Sigma)=\{$\\$ 
\forall x\,.\,W(x) \rightarrow A(x),\;\;\;\;\; \forall x\,.\, F(x) \rightarrow A(x),\;\;\;\;\; \forall x\,.\, B(x) \rightarrow A(x),$\\$ 
\forall x\,.\,C(x) \rightarrow A(x),\;\;\;\;\;\; \forall x\,.\, S(x) \rightarrow A(x),\;\;\;\;\;\; \forall x\,.\, G(x) \rightarrow P(x),$\\$
W(w),\;\;\; F(f),\;\;\; B(b),\;\;\; C(c),\;\;\; S(s),\;\;\; G(g)\}$\\$\newline
\textbf{Input\;(OSFOL\;representation)\mbox{:}} $\\$
(1)\;E(a_1\mbox{:}A\;p_1\mbox{:}P)\vee \bar{M}(a_2\mbox{:}A\;a_1\mbox{:}A)\vee \bar{E}(a_2\mbox{:}A\;p_2\mbox{:}P)\vee E(a_1\mbox{:}A\;a_2\mbox{:}A)$\\$
(2)\;M(c_1\mbox{:}C\;b_1\mbox{:}B)\;\;\; (3)\;M(s_1\mbox{:}S\;b_1\mbox{:}B)\;\;(4)\;M(b_1\mbox{:}B\;f_1\mbox{:}F)$\\$
(5)\;M(f_1\mbox{:}F\;w_1\mbox{:}W)\;\;(6)\;\bar{E}(w_1\mbox{:}W\;f_1\mbox{:}F)\;\;(7)\;\bar{E}(w_1\mbox{:}W\;g_1\mbox{:}G)$\\$
(8)\;E(b_1\mbox{:}B\;c_1\mbox{:}C)\;\;\;(9)\;\bar{E}(b_1\mbox{:}B\;s_1\mbox{:}S)\;\;(10)\;P(h(c_1\mbox{:}C))$\\$
(11)\;E(c_1\mbox{:}C\;h(c_1\mbox{:}C))\;\;\;(12)\;P(i(s_1\mbox{:}S))\;\;(13)\;G(j(a_1\mbox{:}A\;a_2\mbox{:}A))$\\$(14)\;E(s_1\mbox{:}S\;i(s_1\mbox{:}S))$\\$\newline
\textbf{Negation\;of\;query\;Q\mbox{:}}$\\
$\indent (\neg Q)\mbox{:}\;\bar{E}(a_1\mbox{:}A\;a_2\mbox{:}A)\vee \bar{E}(a_2\mbox{:}A\;j(a_1\mbox{:}A\;a_2\mbox{:}A))$
}
\caption{Schubert's Steamroller problem~\cite{walther1985}.}
\label{fig:Schubert}
\end{figure}

\indent We next show how the OSFOL report procedure can be applied to the classical \emph{Schubert's Steamroller Problem}~\cite{walther1985,stickel1986}, which is well-studied topic in many-sorted FOL. It is naturally fit into the OSFOL setting, since it involves the partially-ordered set of sorts. The Schubert's Steamroller Problem is found in~\cite{walther1985} and is given as follows:\\\\
``Wolves, foxes, birds, caterpillars, and snails are animals, and there are some of each of them. Also there are some grains, and grains are plants. Every animal either likes to eat all plants or all animals
much smaller than itself that like to eat some plants. Caterpillars and snails are much smaller than birds, which are much smaller than foxes, which in turn are much smaller than wolves. Wolves do
not like to eat foxes or grains, while birds like to eat caterpillars but not snails. Caterpillars and snails like to eat some plants. Therefore there is an animal that likes to eat a grain-eating animal.''\\\\
In~\cite{walther1985} the following predicates are used for the Schubert's Steamroller Problem:\\\\
\begin{tabular}{ll}
$A(t)$: $t$ is an animal,& $W(t)$: $t$ is a wolf,\\
$F(t)$: $t$ is a fox,&$B(t)$: $t$ is a bird,\\
$C(t)$: $t$ is a caterpillar,& $S(t)$: $t$ is a snail,\\
$G(t)$: $t$ is a grain,& $P(t)$: $t$ is a plant,\\
$M(st)$: $s$ is much smaller than $t$,& $E(st)$: $s$ likes to eat $t$.
\end{tabular}
\\\\
\indent Figure~\ref{fig:Schubert} describes an OSFOL representation of Schubert's Steamroller problem in clause notation. We use the Frisch's hybrid model~\cite{frisch1991}, where the sort module is represented by the standard FOL. The third line of the sort module in Figure~\ref{fig:Schubert} indicates that sorts $W, F, B, C, S$, and $G$ are not empty. Note that the sort module, which describes the sort hierarchy, is only used for $\Sigma$-substitutions. The sort module in Figure~\ref{fig:Schubert} shows that sorts $W, F, B, C, S$ are subsorts of sort $A$, while sort $G$ is the subsort of sort $P$. Function symbols $h$, $i$, and $j$ in Figure~\ref{fig:Schubert} are Skolem symbols employed for the Skolemization procedure. Note that query $Q$ is negated in Figure~\ref{fig:Schubert} to find if the empty clause can be derived from the input in Figure~\ref{fig:Schubert} using the $\Sigma$-resolution procedure. We now consider distributed agents each of which has its signature and knowledge base. 

\begin{figure}[h!]
\small
{
$\;\;\;\;\;\;\;\;\;\;\;\;\;\;\;\;\;\;\;\;\neg Q=\bar{E}(a_1\mbox{:}A\;a_2\mbox{:}A)\vee\bar{E}(a_2\mbox{:}A\;j(a_1\mbox{:}A\;a_2\mbox{:}A))$\\
$\text{}\;\;\;\;\;\;\;\;\;\;\;\;\;\;\;\;\;\;\;\;L(x)=\{E \in P_{A\top},\;M \in P_{AA},\;j\in F_{AA, G}\}$\\
$\text{}\;\;\;\;\;\;\;\;\;\;\;\;\;\;\;\;\;\;\;\;K(x)=\{E(a_1\mbox{:}A\;p_1\mbox{:}P)\vee \bar{M}(a_2\mbox{:}A\;a_1\mbox{:}A)\vee\bar{E}(a_2\mbox{:}A\;p_2\mbox{:}P)\vee E(a_1\mbox{:}A\;a_2\mbox{:}A), G(j(a_1\mbox{:}A\;a_2\mbox{:}A))\}\cup F(\Sigma)$
\begin{center}
\includegraphics[width=0.25\textwidth]{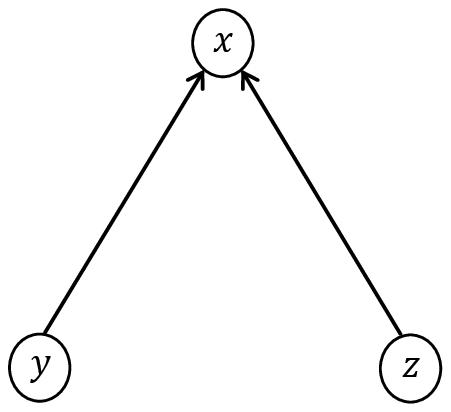}
\end{center}
\begin{tabular}{ll}
\;\;\;\;\;\;\;\;\;\;\;\;\;\;\;\;\;\;\;\;\;\;\;\;\;\;\;\;$L(y)=\{E \in P_{A\top},\;h \in F_{C, P},\;i \in F_{S, P}\}$ & $\;\;L(z)=\{M \in P_{AA}\}$\\
\;\;\;\;\;\;\;\;\;\;\;\;\;\;\;\;\;\;\;\;\;\;\;\;\;\;\;\;$K(y)=\{\bar{E}(w_1\mbox{:}W\;f_1\mbox{:}F),\bar{E}(w_1\mbox{:}W\;g_1\mbox{:}G),$&$\;\;K(z)=\{M(c_1\mbox{:}C\;b_1\mbox{:}B), M(s_1\mbox{:}S\;b_1\mbox{:}B),$\\ 
\;\;\;\;\;\;\;\;\;\;\;\;\;\;\;\;\;\;\;\;\;\;\;\;\;\;\;\;$E(b_1\mbox{:}B\;c_1\mbox{:}C),\bar{E}(b_1\mbox{:}B\;s_1\mbox{:}S),$&$\;\;M(b_1\mbox{:}B\;f_1\mbox{:}F), M(f_1\mbox{:}F\;w_1\mbox{:}W)\}\cup F(\Sigma)$\\ 
\;\;\;\;\;\;\;\;\;\;\;\;\;\;\;\;\;\;\;\;\;\;\;\;\;\;\;\;$E(c_1\mbox{:}C\;h(c_1\mbox{:}C)), P(h(c_1\mbox{:}C)),$&$$\\ 
\;\;\;\;\;\;\;\;\;\;\;\;\;\;\;\;\;\;\;\;\;\;\;\;\;\;\;\;$E(s_1\mbox{:}S\;i(s_1\mbox{:}S)), P(i(s_1\mbox{:}S))\}\cup F(\Sigma)$ &
\end{tabular}
}
\caption{A distributed knowledge base $\mathbb{K}=(V_d, E_d, L(\cdot), K(\cdot))$ for Figure~\ref{fig:Schubert}.}
\label{fig:dagents2}
\end{figure}

\indent Figure~\ref{fig:dagents2} shows a distributed knowledge base $\mathbb{K}=(V_d, E_d, L(\cdot), K(\cdot))$ for $|V_d|=3$ with respect to the combined knowledge base shown in Figure~\ref{fig:Schubert}. To simplify the notation, we also denote a signature\footnote{We assume that each agent in a signature network is equipped with the same static built-in sort module, so each agent does not need to report the sort module to each other. Therefore, we omit the sort hierarchy in each agent's signature and assume that it is implicitly included in each agent's signature.} $L(a)$ as a set of non-logical symbols assigned to agent $a\in V_d$. Agent $y$ and $z$ report a set of $\Sigma$-formulae to agent $x$ by the OSFOL-SEND procedure in Algorithm~\ref{Algorithm:rmreport}. We see that the common predicate symbol between agents $x$ and $y$ is $E \in P_{A\top}$. Agent $y$ can neither report the $\Sigma$-clause $E(c_1\mbox{:}C\,h(c_1\mbox{:}C))$ nor $E(s_1\mbox{:}S\,i(s_1\mbox{:}S))$ to agent x directly because $h, i \notin l(x, y)$. Since functions $h$ and $i$ are acceptable for our un-Skolemization procedure (see Appendix A), agent $y$ can report the un-Skolemized $\Sigma$-formulae $\forall c_1\mbox{:}C\exists p_1\mbox{:}P\,.\,E(c_1\mbox{:}C\,p_1\mbox{:}P)$ and $\forall s_1\mbox{:}S\exists p_2\mbox{:}P\,.\, E(s_1\mbox{:}S\, p_2\mbox{:}P)$ to agent $x$. Note that these $\Sigma$-formulae are not $\Sigma$-clauses, although the Skolemization of them are $\Sigma$-clauses. The reported $\Sigma$-formulae from agent $y$ to agent $x$ by the OSFOL-SEND procedure in Algorithm~\ref{Algorithm:rmreport} are as follows:\\

\noindent
{\small
(1)\;$\bar{E}(w_1\mbox{:}W f_1\mbox{:}F)\;\;\;\;\;\; \;\;\;\;\;\;\;\;\;\,\;\;\;\;\,\;\;(2)\;\bar{E}(w_1\mbox{:}W g_1\mbox{:}G)$\;\;\;\;\;\; \;\;\;\;\;\;\;\;\;\,\;\;\;\;\, (3)\;$E(b_1\mbox{:}B\;c_1\mbox{:}C)$\\ (4)\;$\bar{E}(b_1\mbox{:}B\;s_1\mbox{:}S)$\;\;\;\;\;\; \;\;\;\;\;\;\;\;\;\;\,\;\;\;\;\, (5)\;$\forall c_1\mbox{:}C\exists p_1\mbox{:}P\,.\,E(c_1\mbox{:}C\;p_1\mbox{:}P)$\\ (6)\;$\forall s_1\mbox{:}S\exists p_2\mbox{:}P\,.\, E(s_1\mbox{:}S\;p_2\mbox{:}P)$\\
}

In (5) and (6) the universal quantifiers are not omitted because the order of quantifiers has to be considered. By the OSFOL-RECV procedure in Algorithm~\ref{Algorithm:rmreport} agent $x$ receives those $\Sigma$-formulae, and then adds them to $K^\prime(x)$ after Skolemization. Meanwhile, the reported $\Sigma$-formulae from agent $z$ to agent $x$ by the OSFOL-SEND procedure in Algorithm~\ref{Algorithm:rmreport} are as follows:\\

\noindent
{\small
(7)\;$M(c_1\mbox{:}C\;b_1\mbox{:}B)$ \;\;\;\;\;\;\;\;\;\;\,\;\;\;\;\,\;\;\;\;\,(8)\;$M(s_1\mbox{:}S\;b_1\mbox{:}B)$\\ (9)\;$M(b_1\mbox{:}B\;f_1\mbox{:}F)$ \;\;\;\;\;\;\;\;\;\;\;\;\;\;\,\;\;\;\;\,(10)\;$M(f_1\mbox{:}F\;w_1\mbox{:}W)$\\\\
}
\indent Since the above $\Sigma$-formulae do not contain any existential quantifier, agent $x$ does not need to Skolemize the reported $\Sigma$-formulae from agent $z$. In contrast, (5) and (6) are Skolemized in agent $x$ by the OSFOL-RECV procedure in Algorithm~\ref{Algorithm:rmreport} using Skolem symbols $SK_1$ and $SK_2$ which have not been used in agent $x$. Since $SK_1$ and $SK_2$ are non-logical symbols, they are now added to $L(x)$. After the receiving procedures from agent $y$ and agent $z$ including Skolemization have been completed, $K^\prime(x)$ contains the following $\Sigma$-clauses:\\

\noindent
{\small
(1)\;$\bar{E}(w_1\mbox{:}W\;f_1\mbox{:}F)$ \;\;\;\;\;\;\;\;\;\;\;\;\,\;\;\;\;\,(2)\;$\bar{E}(w_1\mbox{:}W\;g_1\mbox{:}G)$\\ (3)\;$E(b_1\mbox{:}B\;c_1\mbox{:}C)$ \;\;\;\;\;\;\;\;\;\;\;\;\;\;\;\;\;\;\,\;(4)\;$\bar{E}(b_1\mbox{:}B\;s_1\mbox{:}S)$\\ (5)\;$E(c_1\mbox{:}C\;SK_1(c_1\mbox{:}C))$\;\;\;\;\;\;\,\;\;\;\;\,(6)\;$P(SK_1(c_1\mbox{:}C))$\\
(7)\;$E(s_1\mbox{:}S\;SK_2(s_1\mbox{:}S))$ \;\;\;\;\;\;\;\;\;\;\,(8)\;$P(SK_2(s_1\mbox{:}S))$\\ (9)\;$M(c_1\mbox{:}C\;b_1\mbox{:}B)$ \;\;\;\;\;\;\;\;\;\;\;\;\;\;\;\;\;\,(10)\;$M(s_1\mbox{:}S\;b_1\mbox{:}B)$\\ (11)\;$M(b_1\mbox{:}B\;f_1\mbox{:}F)$\;\;\;\;\;\;\;\;\;\;\; \;\;\;\;\,(12)\;$M(f_1\mbox{:}F\;w_1\mbox{:}W)$\\
(13)\;$E(a_1\mbox{:}A\;p_1\mbox{:}P)\vee \bar{M}(a_2\mbox{:}A\;a_1\mbox{:}A)\vee \bar{E}(a_2\mbox{:}A\;p_2\mbox{:}P)\vee E(a_1\mbox{:}A\;a_2\mbox{:}A)$\\
(14)\;$G(j(a_1\mbox{:}A\;a_2\mbox{:}A))$\\
(15)\;$\bar{E}(a_1\mbox{:}A\;a_2\mbox{:}A)\vee \bar{E}(a_2\mbox{:}A\;j(a_1\mbox{:}A\;a_2\mbox{:}A))$\\\\
}
\indent Clauses (1)--(8) are those received from agent $y$, whereas clauses (9)--(12) are those received from agent $z$. Clauses (13)--(14) are the clauses that already exist in $K(x)$. Clause (15) is the negation of query $Q$ that has been added to $K^\prime(x)$ by Algorithm~\ref{Algorithm:rmreport}. The remaining steps for decider $x$ are to use the $\Sigma$-resolution procedure and to find if the empty clause can be derived from $K^\prime(x)$. Since the Schubert's Steamroller Problem has already been solved using many-sorted logic with an improved deductive efficiency than that of FOL~\cite{walther1985, weibel1997}, we use the similar steps found in~\cite{walther1985}:\\

\noindent
{\small
(16)\;$E(a_1\mbox{:}A\;p_1\mbox{:}P)\vee \bar{M}(a_2\mbox{:}A\;a_1\mbox{:}A)\vee \bar{E}(a_2\mbox{:}A\;p_2\mbox{:}P)\vee\bar{E}(a_2\mbox{:}A\;j(a_1\mbox{:}A\;a_2\mbox{:}A))$; 13(4) + 15(1)\\
(17)\;$E(a_1\mbox{:}A\;p_1\mbox{:}P)\vee \bar{M}(a_2\mbox{:}A\;a_1\mbox{:}A)\vee\bar{E}(a_2\mbox{:}A\;j(a_1\mbox{:}A\;a_2\mbox{:}A))$; factoring from (16)\\
(18)\;$E(w_1\mbox{:}W\;p_1\mbox{:}P)\vee \bar{E}(f_1\mbox{:}F\;j(w_1\mbox{:}W\;f_1\mbox{:}F))$; 17(2) + 12(1)\\
(19)\;$E(f_1\mbox{:}F\;p_1\mbox{:}P)\vee \bar{E}(b_1\mbox{:}B\;j(f_1\mbox{:}F\;b_1\mbox{:}B))$; 17(2) + 11(1)\\
(20)\;$\bar{E}(f_1\mbox{:}F\;j(w_1\mbox{:}W\;f_1\mbox{:}F))$; 18(1) + 2(1)\\
(21)\;$\bar{E}(b_1\mbox{:}B\;j(f_1\mbox{:}F\;b_1\mbox{:}B))$; 19(1) + 20(1)\\
(22)\;$E(b_1\mbox{:}B\;p_1\mbox{:}P)\vee \bar{M}(s_1\mbox{:}S\;b_1\mbox{:}B)\vee \bar{E}(s_1\mbox{:}S\;p_2\mbox{:}P)$; 13(4) + 4(1)\\
(23)\;$\bar{M}(s_1\mbox{:}S\;b_1\mbox{:}B)\vee \bar{E}(s_1\mbox{:}S\;p_2\mbox{:}P)$; 21(1) + 22(1)\\
(24)\;$\bar{E}(s_1\mbox{:}S\;p_2\mbox{:}P)$; 23(1) + 10(1)\\
(25)\;$\square$\;; 24(1) + 7(1)\\\\
}
\indent Note that agent $y$ reports $\Sigma$-formulae involving predicate symbol $E$, whereas agent $z$ reports $\Sigma$-formulae involving predicate symbol $M$ to agent $x$. Given a query $Q$, agent $x$ collects reports from agents $y$ and $z$, finding if the empty clause can be derived from $K^\prime(x)$ by Algorithm~\ref{Algorithm:rmreport}. If so, the query $Q$ is then added as a theorem to $K(x)$.

We next discuss the main results of this paper. First, it is easy to see that the resolution rule for $\Sigma$-ground clauses is the same as the resolution rule for propositional clauses. The next lemma therefore follows directly from Theorem 2 in~\cite{slagle1970}. Recall that a set of $\Sigma$-clauses is synonymous with a $\Sigma$-formula that is a conjunction of all those $\Sigma$-clauses (see Section~\ref{sec:syntax}).

\begin{lem}
\label{lem:rescraig}
Let $A$ and $B$ be finite sets (conjunction) of $\Sigma$-ground clauses, where $B$ is not $\Sigma$-unsatisfiable. If $A\,\&\,B$ is $\Sigma$-unsatisfiable, then there is a finite set $I$ of $\Sigma$-ground clauses such that (a) $A \vdash_{\Sigma\mbox{-}res} I$, (b) $I$ subsumes any CNF of $\neg B$, and therefore (c) $I\,\&\, B$ is $\Sigma$-unsatisfiable.
\end{lem}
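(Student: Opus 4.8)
The plan is to mirror the classical propositional/ground argument behind Slagle's resolution-theoretic version of Craig interpolation (Theorem~2 in~\cite{slagle1970}), transporting it to $\Sigma$-ground clauses. First I would observe that since a $\Sigma$-ground clause contains no variables and hence no sort information relevant to unification, the $\Sigma$-resolution rule restricted to $\Sigma$-ground clauses coincides with ordinary propositional resolution on the atoms treated as propositional letters (factoring is trivial on ground clauses, and the occurs-check/sort-subsumption conditions in routines (1)--(5) never fire because there are no variables left). Likewise $\Sigma$-unsatisfiability of a finite set of $\Sigma$-ground clauses coincides with propositional unsatisfiability of the corresponding propositional clause set, by a trivial instance of Theorem~\ref{thm:refcomplete} or directly by inspection. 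Consequently the hypothesis that $A\,\&\,B$ is $\Sigma$-unsatisfiable and $B$ is not $\Sigma$-unsatisfiable translates verbatim into the propositional hypothesis of Slagle's theorem.

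Next I would simply invoke Theorem~2 of~\cite{slagle1970}: it produces a finite set $I$ of ground clauses with (a) $A \vdash_{res} I$ in propositional resolution, (b) $I$ subsumes every CNF of $\neg B$, and (c) $I\,\&\,B$ unsatisfiable. The only thing to check is that each assertion survives the translation back into the $\Sigma$-setting. For (a), a propositional resolution derivation of $I$ from $A$ is, literal-for-literal, a $\Sigma$-resolution derivation (each resolution step uses the empty $\Sigma$-mgu, which is trivially a $\Sigma$-substitution since there is nothing to substitute), so $A \vdash_{\Sigma\text{-}res} I$. For (b), subsumption $C \geq_\Sigma C'$ via the empty $\Sigma$-substitution is just the propositional subsumption $C \subseteq C'$, so the subsumption relation is unchanged. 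For (c), $\Sigma$-unsatisfiability equals propositional unsatisfiability on ground clause sets, as already noted. I would also remark that (c) is an immediate formal consequence of (b): if $I$ subsumes some CNF of $\neg B$, then every model of $I$ is a model of that CNF, hence a model of $\neg B$, hence not a model of $B$; so no structure models $I\,\&\,B$, i.e. $I\,\&\,B$ is $\Sigma$-unsatisfiable. This last deduction is purely semantic and does not even need the ground hypothesis, which is worth stating to make the ``and therefore'' in the statement explicit.

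The main obstacle — really the only point requiring care rather than routine bookkeeping — is making the reduction to the propositional case airtight: one must be sure that nothing in the order-sorted apparatus (sort predicates, the well-sortedness side-conditions on terms and substitutions, the definite-program sort module) interferes once all clauses are ground. The cleanest way to discharge this is to note that a $\Sigma$-ground clause is well-sorted by construction, that the unique $\Sigma$-mgu of a set of identical ground literals is the identity substitution, and that the sort module plays no role in deriving resolvents among ground clauses (it is consulted only during $\Sigma$-substitution, which here is vacuous); hence the $\Sigma$-resolution relation on ground clauses is literally propositional resolution under the obvious bijection between ground atoms and propositional variables, and $\Sigma$-satisfiability reduces to propositional satisfiability via the same bijection. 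Once this equivalence is stated, the lemma is a direct corollary of~\cite[Theorem~2]{slagle1970}, and I would keep the proof to a short paragraph: set up the bijection, cite Slagle, and translate (a)--(c) back, noting the semantic derivation of (c) from (b).
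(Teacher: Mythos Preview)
Your proposal is correct and follows exactly the paper's approach: the paper does not give a separate proof but simply remarks that $\Sigma$-resolution on $\Sigma$-ground clauses coincides with propositional resolution and then cites Theorem~2 of~\cite{slagle1970} as yielding the lemma directly. Your write-up is a more detailed (and entirely accurate) unpacking of that same one-line reduction.
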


The following theorem is based on the \emph{Interpolation Theorem} for FOL resolution (see Theorem 3 in~\cite{slagle1970}). The main difference is that the former uses the Sorted Herbrand Theorem (see Theorem~\ref{thm:Herbrand}) instead of the Herbrand Theorem along with the Lifting Theorem for $\Sigma$-resolution (see Theorem~\ref{thm:Lifting}).

\begin{thm}
\label{thm:craig}
Let $A$ and $B$ be finite sets (conjunction) of $\Sigma$-clauses, where $B$ is not $\Sigma$-unsatisfiable. If $A\,\&\,B$ is $\Sigma$-unsatisfiable, then there is a finite set $I$ of $\Sigma$-clauses such that (a) $A \vdash_{\Sigma\mbox{-}res} I$, (b) $I\,\&\, B$ is $\Sigma$-unsatisfiable, and (c) every predicate symbol occurring in $I$ occurs in both $A$ and $B$.
\end{thm}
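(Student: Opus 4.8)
The plan is to deduce the general (non-ground) interpolation result from the ground version, Lemma~\ref{lem:rescraig}, by combining the Sorted Herbrand Theorem with the Lifting Theorem for $\Sigma$-resolution, exactly in analogy with the passage from the propositional Craig interpolation to the first-order version in~\cite{slagle1970}. First I would invoke Theorem~\ref{thm:Herbrand} (the Sorted Herbrand Theorem): since $A\,\&\,B$ is $\Sigma$-unsatisfiable, there is a finite $\Sigma$-unsatisfiable set of $\Sigma$-ground clauses, which I can take to be of the form $A'\,\&\,B'$, where $A'$ is a finite set of $\Sigma$-ground instances of clauses in $A$ and $B'$ is a finite set of $\Sigma$-ground instances of clauses in $B$. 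I must also record that $B'$ is not $\Sigma$-unsatisfiable: this follows because $B$ has a model, and a model of $B$ still satisfies every ground instance $B'$ of $B$ (here one uses that each sort is non-empty, as assumed in Section~\ref{sec:syntax}, so that ground instances actually exist and the restriction to the Herbrand-style universe is harmless).

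Next I would apply Lemma~\ref{lem:rescraig} to the ground sets $A'$ and $B'$: it yields a finite set $I'$ of $\Sigma$-ground clauses with $A' \vdash_{\Sigma\mbox{-}res} I'$ and $I'\,\&\,B'$ $\Sigma$-unsatisfiable. Crucially I need the predicate-symbol containment for $I'$; the ground lemma as stated gives (b) ``$I'$ subsumes any CNF of $\neg B'$,'' and I would argue from the resolution derivation $A' \vdash_{\Sigma\mbox{-}res} I'$ that every predicate symbol in $I'$ occurs in $A'$ (resolution introduces no new predicate symbols), hence in $A$, while subsumption of a CNF of $\neg B'$ forces every predicate symbol in $I'$ to occur in $\neg B'$, hence in $B'$, hence in $B$. (If the ground lemma is read as already delivering the two-sided predicate condition, this step is immediate; otherwise it is the short syntactic argument just sketched, which is how Theorem~3 of~\cite{slagle1970} is set up.)

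Then I would lift: since $A' \vdash_{\Sigma\mbox{-}res} I'$ and $A'$ consists of $\Sigma$-ground instances of $A$, Theorem~\ref{thm:Lifting} (the Lifting Theorem for $\Sigma$-resolution) gives, for each $\Sigma$-ground clause $C' \in I'$, a $\Sigma$-clause $C$ with $C \geq_\Sigma C'$ and $A \vdash_{\Sigma\mbox{-}res} C$. Collecting these $C$ into a finite set $I$, we get $A \vdash_{\Sigma\mbox{-}res} I$ and $I \geq_\Sigma I'$. Since subsumption instances preserve logical strength (each $C$ entails the corresponding $C'$), $I$ entails $I'$, so $I\,\&\,B'$ is $\Sigma$-unsatisfiable, and a fortiori $I\,\&\,B$ is $\Sigma$-unsatisfiable, giving (a) and (b). For (c), note that every predicate symbol of each lifted clause $C$ already occurs in the clauses of $A$ used in its derivation and, because $C \geq_\Sigma C'$ implies $C\theta \subseteq C'$ for some $\Sigma$-substitution $\theta$ and substitution does not delete predicate symbols, every predicate symbol of $C$ also occurs in $C'$, hence in $I'$, hence in $B$; so every predicate symbol of $I$ occurs in both $A$ and $B$.

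I expect the main obstacle to be the predicate-symbol bookkeeping in the lifting step, item (c): one has to be careful that the subsumption $C \geq_\Sigma C'$ relates the predicate symbols of the lifted clause $C$ to those of the ground clause $C'$ in the right direction (it does, since $C\theta \subseteq C'$ means $C$ has \emph{no more} literals than $C'$ up to instantiation, and instantiation changes only terms, not predicate symbols), and that the ``both $A$ and $B$'' requirement is inherited correctly from the ground conclusion. A secondary point requiring care is the guarantee that $B'$ remains non-$\Sigma$-unsatisfiable after passing to ground instances — this is where the standing non-empty-sort assumption is used, ensuring that a $\Sigma$-model of $B$ restricts to a $\Sigma$-model of its ground instances. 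Everything else is a direct translation of the unsorted argument in~\cite{slagle1970}, with Theorem~\ref{thm:Herbrand} and Theorem~\ref{thm:Lifting} substituting for the classical Herbrand and Lifting theorems.
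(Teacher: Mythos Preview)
Your proposal is correct and follows essentially the same route as the paper: reduce to ground via the Sorted Herbrand Theorem, apply Lemma~\ref{lem:rescraig} at the ground level, lift the resulting set with Theorem~\ref{thm:Lifting}, and read off (a)--(c) from subsumption. Your explicit check that $B'$ is not $\Sigma$-unsatisfiable (needed to invoke Lemma~\ref{lem:rescraig}) and your more detailed bookkeeping for (c) via $C\theta\subseteq C'$ are points the paper leaves implicit, but the overall argument is the same.
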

\begin{proof}
By Theorem~\ref{thm:Herbrand}, if $A\,\&\,B$ is $\Sigma$-unsatisfiable, then there are finitely many $\Sigma$-ground clauses $A_1,\ldots,A_j$ of $A$ and finitely many $\Sigma$-ground clauses $B_1,\ldots,B_k$ of $B$ such that $A_1\,\&\,\cdots\,\&\,A_j\,\&\,B_1\,\&\,\cdots\,\&\,B_k$ is $\Sigma$-unsatisfiable. Then, by Lemma~\ref{lem:rescraig}, there is a finite set $I_g$ of $\Sigma$-ground clauses such that (1) $A_1\,\&\,\cdots\, \&\,A_j \\\vdash_{\Sigma\mbox{-}res} I_g$, (2) $I_g$ subsumes any CNF of $\neg B_1\vee\ldots\vee \neg B_k$, and therefore (3) $I_g\,\&\, B_1\,\&\,\cdots\,\&\,B_k$ is $\Sigma$-unsatisfiable. Then, by Theorem~\ref{thm:Lifting}, we have $A \vdash_{\Sigma\mbox{-}res} I$ such that $I \geq_\Sigma I_g$. It follows that $I$ subsumes any CNF of $\neg B_1\vee\ldots\vee\neg B_k$, and therefore $I\&\,B_1\,\&\,\cdots\,\&\,B_k$ is $\Sigma$-unsatisfiable. By Theorem~\ref{thm:Herbrand}, we have that $I\,\&\, B$ is $\Sigma$-unsatisfiable. Since $I$ subsumes any CNF of $\neg B_1\vee\ldots\vee\neg B_k$ and $A \vdash_{\Sigma\mbox{-}res} I$, every predicate symbol occurring in $I$ occurs in both $A$ and $B$. 
\end{proof}

\begin{thm}
\label{thm:interpolation}
Let $D$ be a decider agent and $x \neq D$ be an agent in a signature tree $\mathbb{S}=(V_d, E_d, L(\cdot))$ such that $(x, D) \in E_d$. Let $Q \in \mathcal{L}(D)$ be a query and let $\bar{K}(D):=K(D)\,\&\,\neg Q$. If $K(x)\,\&\,\bar{K}(D)$ is $\Sigma$-unsatisfiable, the $\Sigma$-unsatisfiability is obtained at $D$ by Algorithm~\ref{Algorithm:rmreport}.
\end{thm}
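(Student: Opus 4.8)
The plan is to reduce the statement to two applications of the refutation-completeness of $\Sigma$-resolution (Theorem~\ref{thm:refcomplete}) at the decider, bridged by the interpolation theorem for $\Sigma$-resolution (Theorem~\ref{thm:craig}). First I would dispatch the degenerate case in which $\bar{K}(D)$ is already $\Sigma$-unsatisfiable: in Algorithm~\ref{Algorithm:rmreport} the decider builds $K^\prime(D)$ from $K(D)$ and adjoins $\neg Q$, so $\bar{K}(D)\subseteq K^\prime(D)$ at the moment $D$ runs the $\Sigma$-resolution procedure, and by Theorem~\ref{thm:refcomplete} the empty clause is derivable there, which is exactly what Algorithm~\ref{Algorithm:rmreport} tests for. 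Removing this case also clears the way to apply Theorem~\ref{thm:craig} afterwards, since that theorem requires its second argument to be $\Sigma$-satisfiable; note moreover that $K(x)$ is $\Sigma$-satisfiable by the standing consistency assumption on knowledge bases.

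In the main case I would put $A:=K(x)$, $B:=\bar{K}(D)$ and invoke Theorem~\ref{thm:craig} to get a finite set $I$ of $\Sigma$-clauses with (a) $K(x)\vdash_{\Sigma\mbox{-}res} I$, (b) $I\,\&\,\bar{K}(D)$ $\Sigma$-unsatisfiable, and (c) every predicate symbol of $I$ lying in $P(x)\cap P(D)=P(x,D)$. Then I would trace Algorithm~\ref{Algorithm:rmreport} along the edge $(x,D)$, which is processed since $d(x,D)>0=d(D,D)$ and which is the unique outgoing edge of $x$ in the signature tree: agent $x$ runs the $\Sigma$-resolution procedure on $K^\prime(x)\supseteq K(x)$ and therefore derives the finitely many clauses of $I$, after which OSFOL-SEND$(x,D,K^\prime(x))$ is called. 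By (c) each $C\in I$ has $P(C)\subseteq P(x,D)$, so it is routed into $U$, or into $W$ if it carries a function or constant symbol outside $l(x,D)$; invoking the standing assumption that the functions in play are acceptable for un-Skolemization, the un-Skolemization of $W$ succeeds and abstracts away exactly the symbols of $L(C)\setminus l(x,D)$, so that every resulting $\Sigma$-formula — whose predicate symbols were already common — lies in $\mathcal{L}(l(x,D))$; hence OSFOL-SEND does not abort and transmits a set $U$ containing the members of $I$ or un-Skolemized counterparts of them. Agent $D$ then runs OSFOL-RECV, Skolemizing the received $\Sigma$-formulae with fresh Skolem symbols and adding the results to $K^\prime(D)$.

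It remains to show that $K^\prime(D)$ is $\Sigma$-unsatisfiable and to apply Theorem~\ref{thm:refcomplete} one final time, so that $D$ derives the empty clause and returns success. The key point is that the un-Skolemize-then-re-Skolemize round trip carried out by OSFOL-SEND and OSFOL-RECV preserves joint $\Sigma$-unsatisfiability with $\bar{K}(D)$: by Lemma~\ref{lem:unskolemization} the un-Skolemization performed by $x$ is sound, by Lemma~\ref{lem:skolem} the re-Skolemization performed by $D$ is sound, and because $D$ picks fresh Skolem symbols that do not occur in $\bar{K}(D)$, adjoining $\bar{K}(D)$ does not interfere with either step. Consequently the clause set held by $D$ after OSFOL-RECV together with $\bar{K}(D)$ is $\Sigma$-unsatisfiable if and only if $\{C\in K^\prime(x):P(C)\subseteq P(x,D)\}\cup\bar{K}(D)$ is, and the latter is $\Sigma$-unsatisfiable by (b) since it includes $I\cup\bar{K}(D)$. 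As $\bar{K}(D)\subseteq K^\prime(D)$, it follows that $K^\prime(D)$ is $\Sigma$-unsatisfiable, and Theorem~\ref{thm:refcomplete} finishes the proof.

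I expect the main obstacle to be precisely this last step: making the bookkeeping of the round-trip un-Skolemization rigorous, in particular checking that adjoining $\bar{K}(D)$ does not disturb the soundness statements of Lemmas~\ref{lem:unskolemization} and~\ref{lem:skolem} (which are stated for a bare $\Sigma$-clause set, not in the presence of extra axioms) and that the fresh Skolem symbols introduced by $D$ can indeed be chosen to avoid every symbol of $\bar{K}(D)$ and of the clauses already in $K^\prime(D)$. A secondary point worth stating explicitly is that clauses supplied to $x$ by its own predecessors only enlarge $K^\prime(x)$, and downstream $K^\prime(D)$, so they cannot destroy derivability of the empty clause, while the acceptability assumption guarantees that they do not cause OSFOL-SEND to abort.
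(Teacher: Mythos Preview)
Your proposal is correct and follows essentially the same route as the paper: dispose of the trivial case where $\bar K(D)$ is already $\Sigma$-unsatisfiable, apply Theorem~\ref{thm:craig} with $A=K(x)$ and $B=\bar K(D)$ to obtain the interpolant $I$, use part (c) to ensure OSFOL-SEND transmits $I$ (or its un-Skolemized image $u(I)$), and then invoke Lemmas~\ref{lem:skolem} and~\ref{lem:unskolemization} to show that the Skolemize/un-Skolemize round trip preserves joint $\Sigma$-unsatisfiability with $\bar K(D)$. Your treatment is in fact more careful than the paper's on two points it leaves implicit --- the freshness of the Skolem symbols introduced at $D$ and the harmlessness of clauses arriving at $x$ from its own predecessors --- so the obstacles you flag are exactly the right ones, but they do not require any idea beyond what you have already sketched.
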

\begin{proof}
If $\bar{K}(D)$ is $\Sigma$-unsatisfiable, the proof is trivial. Assume that $\bar{K}(D)$ is not $\Sigma$-unsatisfiable. By Theorem~\ref{thm:craig}, there is a finite set $I$ of $\Sigma$-clauses such that (a) $K(x) \vdash_{\Sigma\mbox{-}res} I$, (b) $I\,\&\, \bar{K}(D)$ is $\Sigma$-unsatisfiable, and (c) every predicate symbol occurring in $I$ occurs in both $K(x)$ and $\bar{K}(D)$. Let $u(I)$ be a resulting set of our un-Skolemization procedure applied to $I$. By part (c) of Theorem~\ref{thm:craig}, $P(C) \subset P(x, D)$ for each $\Sigma$-clause $C \in I$, which follows that the set $U$ of $\Sigma$-formulae that are sent from agent $x$ to agent $D$ includes $u(I)$ (up to variable renaming) by the OSFOL-SEND procedure in Algorithm~\ref{Algorithm:rmreport} and our assumption that functions used in the OSFOL-SEND procedure are acceptable for the un-Skolemization procedure. \\\indent For a given set $X$, let $sk(X)$ denote a Skolemized set of $X$, which is obtained by Skolemizing each element of $X$. Since $I\,\&\, \bar{K}(D)$ is $\Sigma$-unsatisfiable, $sk(u(I))\&\, \bar{K}(D)$ is $\Sigma$-unsatisfiable by Lemmas~\ref{lem:skolem} and~\ref{lem:unskolemization}. By the OSFOL-RECV procedure in Algorithm~\ref{Algorithm:rmreport}, the received set $U$ from agent $x$ is Skolemized to $sk(U)$ at agent $D$. Since $sk(u(I))\&\, \bar{K}(D)$ is $\Sigma$-unsatisfiable, $sk(U)\,\&\, \bar{K}(D)$ is $\Sigma$-unsatisfiable at $D$. Thus, the $\Sigma$-unsatisfiability is obtained at $D$ by Algorithm~\ref{Algorithm:rmreport}.
\end{proof}

The OSFOL report procedure can be viewed as a distributed resolution (theorem proving) procedure in that given a query $Q$, it performs a (refutation) theorem proving process using $\Sigma$-resolution rules in a distributed manner. Recall that a resolution procedure is refutation-complete if it can derive the empty clause from every unsatisfiable set of clauses. The following theorem says that our OSFOL report procedure for automated theorem proving is refutation-complete.

\begin{thm}
\label{thm:algo3e1}
Let $\mathbb{K}=(V_d, E_d, L(\cdot), K(\cdot))$ be a knowledge base over a signature tree $\mathbb{S}=(V_d, E_d, L(\cdot))$. Given a decider $D \in V_d$ and a query $Q \in \mathcal{L}(D)$, $K(V_d)\models_\Sigma Q$ iff Algorithm~\ref{Algorithm:rmreport} returns success.
\end{thm}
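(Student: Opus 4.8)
The plan is to prove both directions of the biconditional by induction on the structure of the signature tree, reducing the multi-agent case to the two-agent case handled by Theorem~\ref{thm:interpolation} together with the soundness facts established for the send/receive procedures.

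\textbf{Soundness direction ($\Leftarrow$).} First I would argue that if Algorithm~\ref{Algorithm:rmreport} returns success, then $K(V_d)$ is $\Sigma$-unsatisfiable together with $\neg Q$, hence $K(V_d)\models_\Sigma Q$. The key observation is that every $\Sigma$-formula that ever enters any $K^\prime(a)$ is a $\Sigma$-logical consequence of $K(V_d)\,\&\,\neg Q$: clauses start in some $K(a)\subseteq K(V_d)$ (plus $\neg Q$ at $D$), $\Sigma$-resolvents and factors are sound by Theorem~\ref{thm:refcomplete}'s underlying inference rules, the OSFOL-SEND procedure only transmits un-Skolemized $\Sigma$-formulae whose Skolemization is $\Sigma$-entailed by the sender's clauses (Lemma~\ref{lem:unskolemization}), and OSFOL-RECV re-Skolemizes preserving $\Sigma$-(un)satisfiability (Lemma~\ref{lem:skolem}). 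Since success means $D$ derives {\scriptsize$\square$} from $K^\prime(D)$, and {\scriptsize$\square$} is $\Sigma$-unsatisfiable, it follows that $K(V_d)\,\&\,\neg Q$ is $\Sigma$-unsatisfiable, i.e.\ $K(V_d)\models_\Sigma Q$.

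\textbf{Completeness direction ($\Rightarrow$).} Suppose $K(V_d)\models_\Sigma Q$, so $K(V_d)\,\&\,\neg Q$ is $\Sigma$-unsatisfiable. I would induct on $|V_d|$. The base case $|V_d|=1$ is just Theorem~\ref{thm:refcomplete}: the decider runs $\Sigma$-resolution directly on $K(D)\,\&\,\neg Q$ and derives {\scriptsize$\square$}. For the inductive step, pick a leaf agent $x$ of the signature tree farthest from $D$, with unique immediate successor $y$. Group the tree as $A:=K(x)$ and $B:=$ the conjunction of all remaining knowledge bases together with $\neg Q$. Using the peak property of the signature tree, any predicate symbol shared between $K(x)$ and the rest of the tree must be ``visible'' along the edge $(x,y)$, so $P(x,y)$ is large enough that Theorem~\ref{thm:craig} yields a finite interpolant set $I$ of $\Sigma$-clauses with $K(x)\vdash_{\Sigma\text{-}res} I$, with $I\,\&\,B$ still $\Sigma$-unsatisfiable, and with every predicate symbol of $I$ occurring in both sides — hence in $P(x,y)$. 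Then, exactly as in the proof of Theorem~\ref{thm:interpolation}, the OSFOL-SEND/RECV pair transmits (an un-Skolemization of) $I$ from $x$ to $y$ and Skolemizes it into $K^\prime(y)$, preserving $\Sigma$-unsatisfiability. After this step, the subtree obtained by deleting $x$ is again a signature tree (it retains the peak property and the unique-edge condition) with one fewer agent, whose combined knowledge base now $\Sigma$-entails $Q$; apply the induction hypothesis.

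\textbf{Main obstacle.} The delicate point is verifying that the interpolant $I$ produced by Theorem~\ref{thm:craig} is actually \emph{sendable} along the edge $(x,y)$ in the sense required by OSFOL-SEND: Theorem~\ref{thm:craig} controls only predicate symbols, whereas OSFOL-SEND additionally needs $L(w)\subset l(x,y)$ after un-Skolemization, and needs the Skolem functions occurring in $I$ to be \emph{acceptable} for the un-Skolemization procedure. Here I would lean on the standing assumption (stated when Algorithm~\ref{Algorithm:rmsending} is introduced) that all functions used are acceptable and that every Skolem function is acceptable, so non-Skolem function symbols in $I$ — being shared between the two sides via the subsumption argument in Theorem~\ref{thm:craig}(b) — already lie in $l(x,y)$, and the Skolem symbols get eliminated by un-Skolemization, leaving $L(w)\subset l(x,y)$. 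A secondary subtlety is that after relaying $I$ to $y$ and collapsing $x$, one must check the reduced graph is still a \emph{signature tree} satisfying the peak property; this follows because removing a leaf from a directed tree leaves a directed tree, and any symbol set whose carrying subgraph had a decider either still does or has its decider unaffected since $x$ was a leaf. With these points in place the induction goes through, giving that $D$ eventually runs $\Sigma$-resolution on a $\Sigma$-unsatisfiable $K^\prime(D)$ and, by Theorem~\ref{thm:refcomplete}, derives {\scriptsize$\square$}, so Algorithm~\ref{Algorithm:rmreport} returns success.
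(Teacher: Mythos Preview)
Your proposal is correct and follows essentially the same approach as the paper: the $(\Leftarrow)$ direction via soundness of all the steps, and the $(\Rightarrow)$ direction by induction on $|V_d|$, removing a source (leaf) agent $x$, invoking the peak property to get $L(x)\cap L(y)=L(x)\cap L(V_d')$, and then reusing the argument of Theorem~\ref{thm:interpolation} to push the interpolant across the edge $(x,y)$ before applying the inductive hypothesis. Your write-up is in fact more explicit than the paper's on two points the paper leaves implicit: the soundness chain in the $(\Leftarrow)$ direction (the paper simply cites Theorem~\ref{thm:refcomplete}), and the verification that deleting a leaf preserves the signature-tree structure. One small wording issue: in your soundness argument, re-Skolemized formulae with fresh Skolem symbols are not literally $\Sigma$-logical \emph{consequences} of $K(V_d)\,\&\,\neg Q$ but rather equisatisfiable conservative extensions; the conclusion you draw is unaffected, but the phrasing should be tightened.
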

\begin{proof}
$(\Leftarrow)$\\
Assume Algorithm~\ref{Algorithm:rmreport} returns success. Then, $K(V_d)$$\,\&\,\neg Q$ is $\Sigma$-unsatisfiable by Theorem~\ref{thm:refcomplete}. Thus, $K(V_d)\models_\Sigma Q$.\\
$(\Rightarrow)$\\
Assume $K(V_d)\models_\Sigma Q$. Then,  $K(V_d)\,\&\,\neg Q$ is $\Sigma$-unsatisfiable. We proceed by induction on the number $|V_d|$ of agents. If there is only one agent, which is a decider, the result follows directly from Theorem~\ref{thm:refcomplete}. Now, assume the result holds for $|V_d|=n$ for a positive integer $n$ as an inductive hypothesis and consider the case where $|V_d|=n+1$. We show that the $\Sigma$-unsatisfiability is obtained at a decider $D$ by Algorithm~\ref{Algorithm:rmreport}, which follows that Algorithm~\ref{Algorithm:rmreport} returns success at a decider $D$ by Theorem~\ref{thm:refcomplete}.\\
\indent Since $\mathbb{S}$ is a signature tree, we can choose a source agent $x \in V_d$ such that $x \neq D$ and that there is a unique edge $(x, y) \in E_d$ for $y \in V_d$. Further, let $V^\prime_d = V_d \setminus \{x\}$ and $E^\prime_d = E_d \setminus \{(x, y)\}$. We have $L(x) \cap L(y) = L(x) \cap L(V_d^\prime)$ by the peak property. Since $K(V_d)\,\&\,\neg Q$ is $\Sigma$-unsatisfiable by assumption, $K(x)\,\&\,K(V_d^\prime)\,\&\,\neg Q$ is $\Sigma$-unsatisfiable. By letting $\bar{K}(V_d^\prime)=K(V_d^\prime)\,\&\,\neg Q$, $K(x)\,\&\,\bar{K}(V_d^\prime)$ is $\Sigma$-unsatisfiable. Let $U$ be a set of $\Sigma$-formulae that are sent from agent $x$ to agent $y$ by the OSFOL-SEND procedure in Algorithm~\ref{Algorithm:rmreport}. Then, $sk(U)\,\&\,\bar{K}(V_d^\prime)$ is $\Sigma$-unsatisfiable (see the proof of Theorem~\ref{thm:interpolation}). By letting $\bar{\bar{K}}(V_d^\prime):= sk(U)\,\&\,\bar{K}(V_d^\prime)$, the $\Sigma$-unsatisfiability is obtained at $V_d^\prime$. By the inductive hypothesis, the $\Sigma$-unsatisfiability is obtained at a decider $D$ by Algorithm~\ref{Algorithm:rmreport}.
\end{proof}

\section{Concluding remarks}
\label{section:conclusions}
This paper discussed a distributed agent-based automated theorem proving framework using the $\Sigma$-resolution procedure in order-sorted first-order logic. Each agent is only allowed to report its knowledge or observations to its neighboring agent(s) by means of their common language in a distributed agent-based environment. In other words, when building proofs in order-sorted first-order logic, agents in our framework are restricted to report their knowledge or observations only with their predefined language. Therefore, the language-level control of reports from distributed agents is allowed in our framework when building proofs in order-sorted first-order logic in a distributed manner. We also used the assumptions that a sort module expressed by first-order logic is of the definite program form and that functions used in the OSFOL report procedure are acceptable for our un-Skolemization procedure. With these assumptions we established the first refutation-complete report procedure, to the best of our knowledge, for automated theorem proving in order-sorted first-order logic on a signature tree.

\nocite{*}
\bibliographystyle{plain}
\footnotesize
\begin{spacing}{0.1}
\bibliographystyle{plain}
\bibliography{dkim}

\begin{thebibliography}{10}

\bibitem{abadi2010}
A.~Abadi, A.~Rabinovich, and M.~Sagiv.
\newblock Decidable fragments of many-sorted logic.
\newblock {\em Journal of Symbolic Computation}, 45:153--172, 2010.

\bibitem{amir2005}
E.~Amir and S.~McIlraith.
\newblock Partition-based logical reasoning for first-order and propositional
  theories.
\newblock {\em Artificial Intelligence}, 162:49--88, 2005.

\bibitem{beierle1992}
C.~Beierle, U.~Hedtst{\"u}ck, U.~Pletat, P.H. Schmitt, and J.~Siekmann.
\newblock An order-sorted logic for knowledge representation systems.
\newblock {\em Artificial Intelligence}, 55:149--191, 1992.

\bibitem{bourgne2011}
G.~Bourgne and K.~Inoue.
\newblock Partition-based consequence finding.
\newblock In {\em 23rd IEEE International Conference on Tools with Artificial
  Intelligence (ICTAI), 2011}, pages 641--648, Boca Raton, FL, November 7--9
  2011. IEEE Computer Society, Los Alamitos, CA.

\bibitem{chien1998}
YP~Chien, A.~Hudli, and M.~Palakal.
\newblock Using many-sorted logic in the object-oriented data model for fast
  robot task planning.
\newblock {\em Journal of Intelligent and Robotic Systems}, 23:1--25, 1998.

\bibitem{cohn1987}
A.G. Cohn.
\newblock A more expressive formulation of many sorted logic.
\newblock {\em Journal of Automated Reasoning}, 3:113--200, 1987.

\bibitem{cohn1989}
A.G. Cohn.
\newblock Taxonomic reasoning with many-sorted logics.
\newblock {\em Artificial Intelligence Review}, 3:89--128, 1989.

\bibitem{conry1990}
S.E. Conry, D.J. MacIntosh, and R.A. Meyer.
\newblock Dares: A distributed automated reasoning system.
\newblock In {\em Proceedings of the 8th National Conference on Artificial
  Intelligence}, pages 78--85, Boston, MA, July 29--August 3 1990. AAAI Press,
  Menlo Park, CA.

\bibitem{enderton2001}
H.B. Enderton.
\newblock {\em A mathematical introduction to logic}.
\newblock Academic press, San Diego, CA, second edition, 2001.

\bibitem{fisher1997}
M.~Fisher and M.~Wooldridge.
\newblock Distributed problem-solving as concurrent theorem proving.
\newblock In M.~Boman and W.~Velde, editors, {\em Multi-Agent Rationality},
  volume 1237 of {\em Lecture Notes in Computer Science}, pages 128--140.
  Springer, Berlin, Germany, 1997.

\bibitem{fitting1996}
M.~Fitting.
\newblock {\em First-order logic and automated theorem proving}.
\newblock Springer Verlag, New York, NY, 1996.

\bibitem{frisch1986}
A.M Frisch.
\newblock {\em Knowledge retrieval as specialized inference}.
\newblock PhD thesis, Computer Science Department, University of Rochester,
  Rochester, NY, 1986.

\bibitem{frisch1991}
A.M. Frisch.
\newblock The substitutional framework for sorted deduction: {Fundamental}
  results on hybrid reasoning.
\newblock {\em Artificial Intelligence}, 49:161--198, 1991.

\bibitem{ganesh2004}
V.~Ganesh, S.~Berezin, C.~Tinelli, and D.L. Dill.
\newblock Combination results for many sorted theories with overlapping
  signatures.
\newblock Technical report, Stanford University, Stanford, CA, 2004.

\bibitem{goguen1987}
J.A. Goguen and J.~Meseguer.
\newblock Remarks on remarks on many-sorted equational logic.
\newblock {\em ACM {SIGPLAN} {Notices}}, 22:41--48, 1987.

\bibitem{goguen1992}
J.A. Goguen and J.~Meseguer.
\newblock Order-sorted algebra {I}: {Equational} deduction for multiple
  inheritance, overloading, exceptions and partial operations.
\newblock {\em Theoretical Computer Science}, 105:217--273, 1992.

\bibitem{hedtstuck1990}
U.~Hedtst{\"u}ck and P.H. Schmitt.
\newblock A {Calculus} for {Order}-{Sorted} {Predicate} {Logic} with {Sort}
  {Literals}.
\newblock In K.-H. Bl{\"a}sius, U.~Hedtst{\"u}ck, and C.-R. Rollinger, editors,
  {\em Sorts and Types in Artificial Intelligence}, volume 418 of {\em Lecture
  Notes in Computer Science}, pages 61--72. Springer, Berlin, Germany, 1990.

\bibitem{Hustadt1999}
U~Hustadt.
\newblock {\em Resolution-based decision procedures for subclasses of
  first-order logic}.
\newblock PhD thesis, Universit{\"a}t des Saarlandes, Saarbr{\"u}cken, Germany,
  1999.

\bibitem{inoue1991}
K.~Inoue.
\newblock Consequence-finding based on ordered linear resolution.
\newblock In {\em Proceedings of the 12th international joint conference on
  Artificial intelligence}, pages 158--164, Sydney, Australia, August 24--30
  1991. Morgan Kaufmann, San Francisco, CA.

\bibitem{inoue2004}
K.~Inoue.
\newblock Induction as {Consequence} {Finding}.
\newblock {\em Machine Learning}, 55:109--135, 2004.

\bibitem{kaneiwa2004}
K.~Kaneiwa.
\newblock The completeness of logic programming with sort predicates.
\newblock {\em Systems and Computers in Japan}, 35:37--46, 2004.

\bibitem{keisler2012a}
H.J. Keisler and J.M. Keisler.
\newblock Craig interpolation for networks of sentences.
\newblock {\em Annals of Pure and Applied Logic}, 163:1322--1344, 2012.

\bibitem{keisler2012b}
H.J. Keisler and J.M. Keisler.
\newblock Observing, reporting, and deciding in networks of sentences.
\newblock {\em Annals of Pure and Applied Logic}, 165:812--836, 2014.

\bibitem{loveland1978}
D.W. Loveland.
\newblock {\em Automated theorem proving: A logical basis}.
\newblock North-Holland Publishing, Amsterdam, The Netherlands, 1978.

\bibitem{macintosh1991}
D.J. MacIntosh, S.E. Conry, and R.A. Meyer.
\newblock Distributed automated reasoning: Issues in coordination, cooperation,
  and performance.
\newblock {\em IEEE Transactions on Systems, Man and Cybernetics},
  21:1307--1316, 1991.

\bibitem{martelli1982}
A.~Martelli and U.~Montanari.
\newblock An efficient unification algorithm.
\newblock {\em ACM Transactions on Programming Languages and Systems (TOPLAS)},
  4:258--282, 1982.

\bibitem{mccune1988}
W.W. McCune.
\newblock Un-skolemizing clause sets.
\newblock {\em Information Processing Letters}, 29:257--263, 1988.

\bibitem{mcilraith2001}
S.A. McIlraith, T.C. Son, and H.~Zeng.
\newblock Semantic web services.
\newblock {\em IEEE Intelligent Systems}, 16:46--53, 2001.

\bibitem{nelson2010}
T.~Nelson, D.J. Dougherty, K.~Fisler, and S.~Krishnamurthi.
\newblock On the finite model property in order-sorted logic.
\newblock Technical report, Worcester Polytechnic Institute, Worcester, MA,
  2010.

\bibitem{oberschelp1990}
A.~Oberschelp.
\newblock Order sorted predicate logic.
\newblock In K.-H. Bl{\"a}sius, U.~Hedtst{\"u}ck, and C.-R. Rollinger, editors,
  {\em Sorts and types in artificial intelligence}, volume 418 of {\em Lecture
  Notes in Computer Science}, pages 8--17. Springer, Berlin, Germany, 1990.

\bibitem{Palacz2016}
W.~Palacz, E.~Grabska, and G.~{\'{S}}lusarczyk.
\newblock Ontological {A}pproach to {D}esign {R}easoning with the {U}se of
  {M}any-{S}orted {F}irst-{O}rder {L}ogic.
\newblock In {\em Artificial Intelligence and Soft Computing: 15th
  International Conference, ICAISC 2016, Proceedings, Part II}, pages 364--374,
  Zakopane, Poland, June 12--16 2016. Springer.

\bibitem{plotkin1970}
G.D. Plotkin.
\newblock A note on inductive generalization.
\newblock {\em Machine intelligence}, 5:153--163, 1970.

\bibitem{robinson1965}
J.A. Robinson.
\newblock A machine-oriented logic based on the resolution principle.
\newblock {\em Journal of the ACM (JACM)}, 12:23--41, 1965.

\bibitem{singer2004}
A.~Singer and T.~Bird.
\newblock {\em Building a logging infrastructure}.
\newblock USENIX Association, Berkeley, CA, 2004.

\bibitem{slagle1970}
J.R. Slagle.
\newblock Interpolation theorems for resolution in lower predicate calculus.
\newblock {\em Journal of the ACM (JACM)}, 17:535--542, 1970.

\bibitem{socher1991}
R.~Socher.
\newblock Optimizing the clausal normal form transformation.
\newblock {\em Journal of automated reasoning}, 7:325--336, 1991.

\bibitem{stickel1986}
M.E. Stickel.
\newblock Schubert's {Steamroller} {Problem}: {Formulations} and {Solutions}.
\newblock {\em Journal of Automated Reasoning}, 2:89--101, 1986.

\bibitem{stickel1988}
M.E. Stickel.
\newblock Resolution theorem proving.
\newblock {\em Annual review of computer science}, 3:285--316, 1988.

\bibitem{walther1985}
C.~Walther.
\newblock A {M}echanical {S}olution of {S}chubert's {S}teamroller by
  {M}any-{S}orted {R}esolution.
\newblock {\em Artificial Intelligence}, 26:217--224, 1985.

\bibitem{walther1988}
C.~Walther.
\newblock Many-sorted unification.
\newblock {\em Journal of the ACM (JACM)}, 35:1--17, 1988.

\bibitem{walther1990}
C.~Walther.
\newblock Many-sorted inferences in automated theorem proving.
\newblock In K.-H. Bl{\"a}sius, U.~Hedtst{\"u}ck, and C.-R. Rollinger, editors,
  {\em Sorts and Types in Artificial Intelligence}, volume 418 of {\em Lecture
  Notes in Computer Science}, pages 18--48. Springer, Berlin, Germany, 1990.

\bibitem{weibel1997}
T.~Weibel.
\newblock An order-sorted resolution in theory and practice.
\newblock {\em Theoretical computer science}, 185:393--410, 1997.

\bibitem{weidenbach1991}
C.~Weidenbach.
\newblock A sorted logic using dynamic sorts.
\newblock Technical Report MPI-I-91-218, Max-Planck-Institut f{\"u}r
  Informatik, Saarbr{\"u}cken, Germany, 1991.

\bibitem{wos1973}
L.~Wos and G.~Robinson.
\newblock Maximal {Models} and {Refutation} {Completeness}: {Semidecision}
  {Procedures} in {Automatic} {Theorem} {Proving}.
\newblock {\em Studies in Logic and the Foundations of Mathematics},
  71:609--639, 1973.

\end{thebibliography}
\end{spacing}

\end{document}